\newtheorem{theorem}{Theorem}
\newtheorem{lemma}{Lemma}
\newtheorem{corollary}{Corollary}
\newtheorem{dfn}{Definition}
\title{Mean Field Equilibrium in Multi-Armed Bandit Game with Continuous Reward} 
\author{
Xiong Wang$^1$
\and
Riheng Jia$^2$
\affiliations
$^1$The Chinese University of Hong Kong, Hong Kong SAR, China\\
$^2$Zhejiang Normal University, Jinhua, China\\
\emails
xwang@cse.cuhk.edu.hk,
rihengjia@zjnu.edu.cn
\thanks{Corresponding author: Riheng Jia}
}
\renewcommand\footnotemark{}
\begin{document}

\maketitle

\begin{abstract}
Mean field game facilitates analyzing multi-armed bandit (MAB) for \emph{a large number of agents} by approximating  their interactions with  an average effect. Existing mean field models for multi-agent  MAB  mostly assume a binary reward function, which leads to tractable analysis but is usually not applicable in practical scenarios. In this paper, we study the mean field bandit game with a \emph{continuous reward function}. Specifically, we focus on deriving the existence and uniqueness of \emph{mean field equilibrium} (MFE), thereby guaranteeing the asymptotic stability of the multi-agent system. To accommodate the continuous reward function, we encode the learned reward into an agent state, which is in turn mapped to its stochastic arm playing policy and updated using realized observations. We show that the state evolution is upper semi-continuous, based on which the existence of MFE is obtained. As the Markov analysis is mainly for the case of discrete state, we transform the stochastic continuous state evolution into a \emph{deterministic} ordinary differential equation (ODE). On this basis, we can characterize a contraction mapping for the ODE to ensure a unique MFE for the bandit game. Extensive evaluations validate our MFE characterization, and exhibit tight empirical regret of the MAB problem.
\end{abstract}

\section{Introduction}

Great efforts have been devoted to multi-armed bandit (MAB) for sequential decision making, where agents can only observe limited information when pulling arms~\cite{Ref:regret2012bubeck}. Though principled, these techniques are mostly suitable for the single-agent scenario. As multiple agents may coexist, multi-agent learning is proposed to investigate agent interactions ~\cite{Ref:cooperative2005panait}, and particularly Markov game serves as the main tool to characterize the learning equilibrium in multi-agent systems ~\cite{Ref:markov1994littman}. Perfect information is usually needed in Markov game to determine the learning strategy, which makes it inefficient to analyze the equilibrium when the number of agents scales or when dealing with MAB problems. To accommodate this bandit feedback,  Hart and Mas-Colell propose a no-regret learning method~\cite{Ref:a2000hart}, yet only the coarse correlated  equilibrium can be derived for a handful of agents.

Mean field game is an effective model to approximate complex interactions among large populations~\cite{Ref:mean2007lasry,Ref:large2006huang}, where the approximation error is $O(1/\sqrt{N})$ with $N$ the number of agents~\cite{Ref:on2016ying}. Combined with online learning framework, mean field model is applied to multi-agent systems for deciding agents' strategic actions  and characterizing system's stable state~\cite{Ref:learning2017Cardaliaguet,iyer2014mean}. Existing mean field analysis often requires  complete knowledge of not only the reward function, but also the historical information to obtain the mean field equilibrium (MFE), which is not applicable in scenarios with merely limited feedback available. 

The problem of analyzing bandit feedback for many agents remains open until Gummadi \emph{et al.} use the mean field model to study multi-agent MAB in a repeated game~\cite{Ref:mean2013gummadi}, where they derive a unique MFE based on assumptions of binary reward and state regeneration. Following works have adopted their model in cellular network~\cite{Ref:distributed2017maghsudi} and smart grid~\cite{Ref:intelligent2017zhao}. However, the binary reward setting is too restricted for real-world multi-agent systems. Like in a resource competition game, agents usually share the resource with each other instead of occupying exclusively, thus the reward is a continuous value in $[0,1]$ rather than only $0$ or $1$~\cite{yang2018mean,hanif2015mean}. Besides,  agent state is assumed to regenerate with a certain probability to deduce an equilibrium in these works, whereas typical repeated games mostly entail iterative plays of each agent with no  regeneration~\cite{Ref:cooperative2005panait}. 
One critical question then is how to achieve the \emph{equilibrium for the non-regenerated bandit game involving a large number of agents with a generalized continuous reward}.

In this paper, we propose a mean field model to tackle the bandit game in  large-population multi-agent systems with a continuous reward function. We aim  to resolve  the \emph{existence and uniqueness} issues of  MFE, and hence are faced with the following challenges. First, characterizing the learning equilibrium is inconsistent with minimizing the regret since a non-regenerated bandit game is unstable under classical MAB algorithms. Therefore, one needs to model the agent state  cautiously to  strike a balance between system stability and tight-bounded regret. Second, existing Markov game is ineffective to track the state evolution, which is identified in a continuous regime instead of a discrete value due to the generalized reward function. Third,  existence and uniqueness of MFE is hard to derive as we only observe a bandit feedback.

To handle these challenges, we encode and update an agent state using observed rewards, and devise a stationary policy to  map the state to stochastic arm playing strategy  in turn. We show that the state evolution satisfies a \emph{fixed point} condition, thus proving the convergence of states to an existing MFE. Then, we develop a stochastic approximation to transform the stochastic state evolution into a \emph{deterministic ordinary differential equation} (ODE), so that we can derive  a contraction mapping for the ODE to obtain the unique MFE. Finally, we deduce a regret related cumulative state change, and extend the mean field model. Our contributions are summarized:

\begin{itemize}[leftmargin=*]
	\item We propose a new framework of mean field analysis to explore the multi-agent bandit game with a \emph{continuous reward}. Our framework can \emph{generalize} the previous binary reward  function  to a more universal scenario.
	\item We characterize the stable state for a large-population stochastic system with only limited information by deriving the existence and uniqueness of MFE. Specifically, we show there exists a MFE via \emph{Kakutani fixed-point} theorem, and further devise both rigorous and relaxed conditions for $||\cdot||_{\infty}$-\emph{contraction} to obtain the unique MFE, thereby providing a tractable and guaranteed system performance.
	\item We encode the learned reward into agent state, which can both ensure the \emph{system stability} and yield a \emph{tight empirical regret}. Model extensions also reveal the \emph{robustness} of our mean field analysis in different variants of MAB problems.
\end{itemize}
\vspace{-5pt}
\section{Model and Setup} \label{Sec: SysMdl}
We study the repeated bandit game in a multi-agent system, where there are a large number of agents $\mathcal{N}=\{1,2,...,N\}$ and  a set of arms (actions)  $\mathcal{M} = \{1,2,...,M\}$. Time is discretized into slots of equal length $\{0,1,...,n,...\}$. 

\subsection{State and Playing Policy}
The state of each agent $i$ encodes its \emph{learned reward} based on the observations realized so far, which is denoted as $s^i_n = [s^i_n(1),s^i_n(2),...,s^i_n(M)] \in \mathbb{R}^M$ with $s^i_n(j)$ being the learned reward of arm $j$ upon to time slot $n$. 

Agents are assumed to follow a \emph{stationary policy} when solving MAB problems in the repeated bandit game~\cite{Ref:learning2017heliou}. Denote a  simplex $\Delta^{M-1}=\{z \in [0,1]^M: \sum_{j=1}^{M}z(j) =1\}$ as  the probability distribution over $M$ arms. An arm playing policy is a mapping from the state space to the simplex $\sigma: \mathbb{R}^M \rightarrow  \Delta^{M-1}$, i.e., $\sigma (s^i_n,j)$ means the probability that agent $i$ would choose arm $j$ with $\sum_{j=1}^M \sigma(s^i_n,j) =1$. In particular, we focus on a Hedge stationary policy below:
\begin{equation} \label{Eq:hedge}
\sigma(s^i_n,j) = (1-\eta) \frac{\mathrm{Exp}\left(\beta s^i_n(j)\right)}{\sum_{k=1}^M \mathrm{Exp} \left(\beta s^i_n(k)\right)} + \frac{\eta}{M},
\end{equation}
where $\mathrm{Exp}(\cdot)$ represents the exponential function and $\beta >0$ is the smoothing parameter. The value $\sigma(s^i_n,j)$ consists of two parts balanced by parameter $\eta \in [0,1]$, in which the first part is a logit policy and the second is a random selection.

\subsection{Mean Field Reward}
Let $a^i_n$ be  the played arm of agent $i$ following  policy $\sigma (s^i_n)$.  Population profile $f_n=[f_n(1),f_n(2),...,f_n(M)]$ indicates the proportion of agents playing the
various arms in time slot $n$. Hence, the $j$-th element $f_n(j)$ is defined:
\vspace{-2pt}
\begin{equation} \label{Eq:Population}
f_n(j)= \frac{1}{N} \sum \nolimits_{i=1}^{N} \mathbbm{1}_{\{a^i_n = j\}},
\vspace{-3pt}
\end{equation}
where $\mathbbm{1}_{\{\cdot\}}$ is the indicator function.

Each agent's reward of playing an arm is determined by the actions of all agents due to their interactions.  For instance,  in a resource competition game, the reward will decrease if more competitors (agents) simultaneously compete for the same resource (arm). As directly characterizing the influence of agents' actions is difficult for large populations owing to the curse of dimensionality, we employ the mean field model to approximate interactions among agents, and accordingly their rewards will depend on the population profile $f_n$~\cite{Ref:mean2018yang,Ref:mean2013gummadi}. Denote $r(f_n, a^i_n)$ as the realized reward of agent $i$ when pulling arm $a^i_n$ in time slot $n$, where $r(\cdot)$ is the reward  function. With the bandit feedback, any agent only observes its realized reward without knowing the reward function and the population profile.  Formally, $r(\cdot)$ is a continuous function rather than a discrete binary value as in~\cite{Ref:mean2013gummadi,Ref:distributed2017maghsudi}. Apparently, continuous reward is more general, also widely adopted  in both the single-agent MAB~\cite{Ref:UCB,Ref:EXP3} and the multi-agent learning~\cite{Ref:mean2018yang,Ref:learning2017heliou}. By convention, we assume that the reward $r(f_n, a^i_n)$ is in the range $[0,1]$ which can be easily extended to other arbitrary intervals.

The state is updated after an agent obtains a realized reward. If agent $i$ observes $r(f_n,a^i_n)$,  we update its state as:
\begin{equation} \label{Eq:StateUpdate} 
s^i_{n+1} (j)= (1- \gamma_{n})s^i_n(j) + \gamma_{n} w^i_n(j), 
\end{equation}
where:
\begin{equation} \label{Eq:realizedRrd}
w^i_n(j)=
\left \{
\begin{aligned}
&r(f_n, a^i_n)  ~&& \mathrm{if}~a^i_n=j,\\
&s^i_n (j)~&&\mathrm{otherwise}.
\end{aligned}
\right.
\end{equation}
The updating rule implies that only the state to the played arm is renewed while others remain unchanged. Moreover, the stepsize $\gamma_n$ satisfies the following condition:
\begin{equation} \label{Eq:timestep}
\sum \nolimits_{n} \gamma_n = \infty,~~\sum \nolimits_{n} \gamma_n^2 < \infty.
\end{equation} 
\vspace{-12pt}
\subsection{Objective}
Define $\bm{s}_n = \left[s^1_n,s^2_n,...,s^N_n\right]$ as the state profile, i.e., the states of all agents, thus $\bm{s}_n \in [0,1]^{N \times M}$ since $r(f_n,j) \in [0,1]$. Our objective is to analyze the convergence of $\bm{s}_n $, particularly to derive the existence and uniqueness of  MFE. Meanwhile, we will also deduce the cumulative state change, which entails the regret information of the MAB problem. 

\vspace{-3pt}
\section{Existence and Uniqueness of MFE} \label{Sec:ExistEquili}
Characterizing the MFE is critical to a multi-agent system, because it can provide a guaranteed and predictable system performance~\cite{adlakha2015equilibria}. To achieve this goal,  we need to answer two fundamental questions: 1) Does MFE exist? 2) If so, does there exist only one MFE?

The evolution of  state profile $\bm{s}_n $ can be decomposed into playing and evolving processes. The playing  process is the stationary policy of Eq.~\eqref{Eq:hedge}, mapping states to population profile $\Gamma_1: \bm{s}_n \rightarrow f_n$; the evolving process amounts to the state updating of Eq.~\eqref{Eq:StateUpdate}, mapping population profile to states in turn $\Gamma_2: f_n \rightarrow \bm{s}_{n+1}$. Let $\Gamma = \Gamma_2 \circ \Gamma_1$ be the compound  mapping, so the state evolution  is  interpreted as $\Gamma: \bm{s}_n \rightarrow \bm{s}_{n+1}$. The definition of MFE  under mapping $\Gamma$ is now presented.

\begin{dfn} \label{Dn:equilibrium}
State profile $\overline{\bm{s}}$ is a MFE if $\overline{\bm{s}} = \Gamma(\overline{\bm{s}})$. 
\end{dfn}

\subsection{Existence of MFE} 
Definition~\ref{Dn:equilibrium} indicates that a MFE is indeed a \emph{fixed point under mapping $\Gamma$}. Since agents  stochastically  play actions  based on  Eq.~\eqref{Eq:hedge}, we obtain that  $\Gamma$ is a set mapping. Suppose the reward $r(f_n, j)$ is continuous in the population profile $f_n$.

\begin{theorem}  \label{Thm:fixpoint}
There exists a MFE $\overline{\bm{s}}$ satisfying  $\overline{\bm{s}} \in \Gamma(\overline{\bm{s}})$.
\end{theorem}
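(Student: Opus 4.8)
The plan is to apply the Kakutani fixed-point theorem to the set-valued map $\Gamma = \Gamma_2 \circ \Gamma_1$ on the domain of state profiles. First I would fix the right domain: since each realized reward lies in $[0,1]$ and the update \eqref{Eq:StateUpdate} is a convex combination of $s^i_n(j)$ and $w^i_n(j) \in [0,1]$, the cube $K = [0,1]^{N\times M}$ is invariant under $\Gamma$, and $K$ is nonempty, compact, and convex. So the task reduces to checking the two structural hypotheses of Kakutani: (i) for every $\bm{s} \in K$, the image $\Gamma(\bm{s})$ is nonempty, convex and closed, and (ii) $\Gamma$ is upper semi-continuous (equivalently, has closed graph, since $K$ is compact).

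For (i), note that $\Gamma_1$ is genuinely set-valued only because agents play arms stochastically according to the Hedge policy \eqref{Eq:hedge}: given $\bm{s}_n$, the population profile $f_n$ is a random element of the finite grid $\{k/N : 0\le k\le N\}^M \cap \Delta^{M-1}$, so $\Gamma_1(\bm{s})$ is the finite set of all population profiles reachable with positive probability. Composing with the deterministic update $\Gamma_2$, the image $\Gamma(\bm{s})$ is a finite set of state profiles — hence automatically closed. Nonemptiness is clear. Convexity does not hold for a finite set in general, so here I would instead pass to the \emph{convex hull} (or, equivalently, work with the expected/mean-field update, which is the more natural object): define $\Gamma(\bm{s}) := \mathrm{conv}$ of the reachable profiles, or directly the set of all $\bm{s}'$ obtainable by averaging \eqref{Eq:StateUpdate} against \emph{some} distribution consistent with $\sigma(\bm{s},\cdot)$. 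This convexified map still has $K$ as an invariant set and is the map for which the mean-field fixed point is meaningful; its values are nonempty, compact, convex by construction.

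For (ii), I would verify upper semi-continuity of $\Gamma_1$ and of $\Gamma_2$ separately and then use that a composition of u.s.c. maps is u.s.c. The map $s \mapsto \sigma(s,j)$ in \eqref{Eq:hedge} is continuous (it is a ratio of exponentials plus a constant, with strictly positive denominator), so the per-agent playing distributions vary continuously with $\bm{s}$; consequently the set of population profiles attainable with positive probability, and their associated weights, vary upper semi-continuously — the only discontinuities are the harmless appearances of new profiles on the boundary, which is exactly what u.s.c. (closed graph) tolerates. For $\Gamma_2$: given the continuity of the reward $r(f,j)$ in $f$ assumed in the statement, the update map \eqref{Eq:StateUpdate}–\eqref{Eq:realizedRrd} sends $(\bm{s},f)$ continuously to $\bm{s}_{n+1}$ for each fixed realized action profile, so $\Gamma_2$ is continuous. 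Closed graph of the composition then follows: if $\bm{s}^{(k)}\to \bm{s}$, $\bm{t}^{(k)}\in\Gamma(\bm{s}^{(k)})$, $\bm{t}^{(k)}\to\bm{t}$, extract a convergent subsequence of the (finitely many, hence compact) accompanying action/population data, pass to the limit using continuity of $\sigma$ and $r$, and conclude $\bm{t}\in\Gamma(\bm{s})$.

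The main obstacle I anticipate is precisely the tension between (a) $\Gamma_1$ being finite-valued (hence non-convex) and (b) Kakutani requiring convex values; the cleanest fix is to state $\Gamma$ from the outset as the mean-field (expected) update — averaging \eqref{Eq:StateUpdate} over the arm drawn from $\sigma(\bm{s},\cdot)$ — so that $\Gamma$ becomes single-valued and \emph{continuous}, reducing the whole argument to Brouwer; if instead one insists on the stochastic set-valued formulation (which is what the $\overline{\bm s}\in\Gamma(\overline{\bm s})$ notation suggests), one must carefully argue that the convex hull of the reachable profiles is the correct equilibrium notion and that taking convex hulls preserves upper semi-continuity (it does, for compact-valued u.s.c. maps into a finite-dimensional space). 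Once the domain, convexity, and u.s.c./closed-graph points are nailed down, Kakutani delivers $\overline{\bm s}\in\Gamma(\overline{\bm s})$ immediately.
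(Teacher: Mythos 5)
Your proposal follows essentially the same route as the paper: both apply the Kakutani fixed-point theorem to the set-valued map $\Gamma$ on the nonempty, compact, convex domain $[0,1]^{N\times M}$, establishing upper semi-continuity (closed graph) from the continuity of the Hedge policy $\sigma$ and of the reward $r(f,j)$ in $f$. In fact you are more careful than the paper on one point: the paper simply asserts that $\Gamma(\bm{s})$ is convex, whereas you correctly observe that the set of reachable next-states is finite and hence generally non-convex, and you supply the standard repairs (convexify the values, or pass to the expected mean-field update and invoke Brouwer), so your write-up patches a gap that the published proof glosses over.
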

\begin{proof}
The state profile $\bm{s}_n$ is in a nonempty, compact, and convex set $[0,1]^{N \times M}$. The mapping $\Gamma$ maps $\bm{s}_n$ to $\bm{s}_{n+1} \in [0,1]^{N \times M}$ which is a nonempty, closed, convex subset of $[0,1]^{N \times M}$. We next prove $\Gamma$ is upper semi-continuous. 

Let $\mathcal{P}$ be the set of the population profile  $f_n$. For any  state profile $\bm{s}_n$, we have $\Gamma_1(\bm{s}_n) \in \mathcal{P}$ according to the playing process. From Eq.~\eqref{Eq:StateUpdate},  if agent $i$ plays arm $j$, then its state corresponding to this arm  after updating satisfies:
\begin{equation} \label{Eq:setupdate}
s^i_{n+1}(j) \in \bigcup \nolimits_{f_n \in \mathcal{P}} \left\{(1-\gamma_{n})s^i_n(j)+\gamma_{n}r(f_n, j) \right\}.
\end{equation}
Suppose there are arbitrary  sequences $\bm{x}_n, \bm{y}_n \in [0,1]^{N \times M}$ such that $\lim_{n \rightarrow \infty}\bm{x}_n \rightarrow \overline{\bm{x}},\lim_{n \rightarrow \infty} \bm{y}_n \rightarrow \overline{\bm{y}}$, and $\bm{y}_n \in \Gamma(\bm{x}_n)$. Given a state profile, the set $\mathcal{P}$ is determined. Since $r(f_n, j)$ is continuous in $f_n$, we can claim that $\overline{\bm{y}} \in \Gamma(\overline{\bm{x}})$ from  Eq.~\eqref{Eq:setupdate}, i.e., $\Gamma$ is upper semi-continuous.  By applying the  Kakutani fixed-point theorem~\cite{glicksberg1952further}, there exists a fixed point $\overline{\bm{s}}$, or MFE,  under the mapping $\Gamma$.
\end{proof}
Theorem~\ref{Thm:fixpoint} ensures the existence of MFE, to which the states will ultimately converge. Since there may exist multiple MFEs, it is hard to determine which MFE the states will eventually approach. Therefore, we need to further derive a unique MFE so as to achieve a more guaranteed performance. 

\subsection{Uniqueness of MFE}
Unique MFE implies there is only one fixed point. For mean field game, Lasry and Lions show that if the reward is a monotone function known by any agent, one will obtain a unique MFE~\cite{Ref:mean2007lasry}, which  however is inconsistent with the MAB problem. To handle the bandit feedback, we use the stochastic approximation~\cite{Ref:dynamics1999Benaim} to transform the discrete-time bandit game into a continuous-time ODE, and  derive the unique MFE  by proving that the ODE will  only converge to one fixed point. Therefore, we can resort to \emph{the deterministic ODE to figure out the stochastic bandit game}.

\subsubsection{Stochastic approximation}
The state evolution is a stochastic process in that each agent plays arms randomly according to the stationary policy. Considering this fact, we rewrite the state updating of Eq.~\eqref{Eq:StateUpdate}:
\begin{equation} \label{Eq:Update2}
\begin{aligned}
s^i_{n+1}(j) &= s^i_n(j) + \gamma_{n}\left(w^i_n(j) - s^i_n(j)\right)\\
&= s^i_n(j) + \gamma_{n} \left(\mathbb{E}[w^i_n(j)] - s^i_n(j) + u^i_{n}(j)\right),
\end{aligned}
\end{equation}
where $u^i_{n}(j) = w^i_n(j) - \mathbb{E}[w^i_n(j)]$. Denote $\bm{s}_t=\left[s^1_t, s^2_t,...,s^N_t\right]$ as  the state profile at continuous time $t$. In the following, we show that the state profile $\bm{s}_n$ at discrete time will \emph{asymptotically converge} to  $\bm{s}_t$, which is characterized by a deterministic ODE. Hereinafter, we use $t$ to index continuous time, and $n$ to index discrete time slot.

\begin{lemma} \label{Lem:pseudo}
When $n \rightarrow \infty$ and $t \rightarrow \infty$, the state  $s^i_n =[s^i_n(1),s^i_n(2),...,s^i_n(M)]$ of agent $i$ will asymptotically converge to $s^i_t$ specified by the following ODE:
\begin{equation} \label{Eq:ode}
\frac{ds^i_t}{dt} = \mathbb{E}[w^i_t|\bm{s}_t] -s^i_t.
\end{equation}
\end{lemma}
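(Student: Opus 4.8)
The idea is to recognize Eq.~\eqref{Eq:Update2} as a Robbins--Monro stochastic approximation recursion and to apply the classical ``ODE method'' of \cite{Ref:dynamics1999Benaim}, which shows that the affine interpolation $\hat{\bm{s}}(\cdot)$ of the iterates $(\bm{s}_n)$ is an \emph{asymptotic pseudotrajectory} of the semiflow generated by the vector field on the right-hand side of Eq.~\eqref{Eq:ode}. Writing $b^i(\bm{s}):=\mathbb{E}[w^i\mid\bm{s}]-s^i$ for the drift and $u^i_n:=w^i_n-\mathbb{E}[w^i_n\mid\mathcal{F}_n]$ with $\mathcal{F}_n=\sigma(\bm{s}_0,\dots,\bm{s}_n)$ for the noise, the update~\eqref{Eq:Update2} becomes $s^i_{n+1}=s^i_n+\gamma_n\big(b^i(\bm{s}_n)+u^i_n\big)$, which is precisely the form the method requires. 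It then remains to verify four standard hypotheses: regularity of $b$, the Robbins--Monro stepsize conditions, a martingale-difference noise condition, and boundedness of the iterates.

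Three of these are immediate in our setting. The stepsize condition is exactly Eq.~\eqref{Eq:timestep}. Boundedness comes for free: since $r(f_n,\cdot)\in[0,1]$ and the update~\eqref{Eq:StateUpdate} is a convex combination, $\bm{s}_n\in[0,1]^{N\times M}$ for every $n$, a compact set. For the noise, $w^i_n(j)\in[0,1]$ gives $|u^i_n(j)|\le 1$ and $\mathbb{E}[u^i_n\mid\mathcal{F}_n]=0$ by construction, so $(u^i_n)$ is a bounded martingale-difference sequence; hence $\sum_n\gamma_n^2\,\mathbb{E}\big[\|u^i_n\|^2\mid\mathcal{F}_n\big]\le\sum_n\gamma_n^2<\infty$ and the martingale $\sum_{k\le n}\gamma_k u^i_k$ converges almost surely, which is the perturbation condition the ODE method needs. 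For $b$, note from Eq.~\eqref{Eq:realizedRrd} that, conditionally on $\bm{s}_n$, $\mathbb{E}[w^i_n(j)\mid\bm{s}_n]=\big(1-\sigma(s^i_n,j)\big)s^i_n(j)+\sigma(s^i_n,j)\,\mathbb{E}\big[r(f_n,j)\mid\bm{s}_n,\,a^i_n=j\big]$; each $\sigma(s^k_n,\cdot)$ is smooth in $s^k_n$ through the Hedge rule~\eqref{Eq:hedge}, so the joint law of $(a^1_n,\dots,a^N_n)$ depends continuously (indeed smoothly) on $\bm{s}_n$, and $r(f_n,j)$ is continuous in the finitely-many-valued profile $f_n$ by assumption; composing, $b$ is continuous on $[0,1]^{N\times M}$, and Lipschitz there if $r$ is Lipschitz in $f$, so Eq.~\eqref{Eq:ode} defines a well-behaved semiflow.

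Granting these hypotheses, \cite{Ref:dynamics1999Benaim} yields that, almost surely, the continuous-time interpolation $\hat{\bm{s}}(\cdot)$ of $(\bm{s}_n)$ built on the timescale $\tau_n=\sum_{k<n}\gamma_k$ shadows the flow $\Phi$ of Eq.~\eqref{Eq:ode}: for every $T>0$, $\sup_{0\le h\le T} \| \hat{\bm{s}}(t+h)-\Phi_h(\hat{\bm{s}}(t)) \| \to 0$ as $t\to\infty$. Read componentwise, this is exactly the claimed asymptotic convergence of $s^i_n$ to the ODE trajectory $s^i_t$ when $n,t\to\infty$; equivalently, the limit set of $(\bm{s}_n)$ is an internally chain transitive set for Eq.~\eqref{Eq:ode}, which is the property the subsequent uniqueness argument (a contracting vector field) will exploit to single out one fixed point.

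The step I expect to be the main obstacle is the careful treatment of the conditional drift $\mathbb{E}[w^i_n\mid\bm{s}_n]$. First, $f_n$ is built from the \emph{realized} actions of all $N$ agents and includes agent $i$'s own action, so $\mathbb{E}[r(f_n,j)\mid\bm{s}_n,a^i_n=j]$ averages over the other agents' independent draws; one must verify that this is genuinely continuous in $\bm{s}_n$ --- it is, being a finite sum over action profiles of products of the smooth $\sigma(\cdot,\cdot)$'s times bounded continuous reward values --- and that the $O(1/N)$ shift from conditioning on $a^i_n$ is immaterial in the mean-field regime. Second, for ``the'' trajectory $s^i_t$ appearing in the statement to be unambiguous one wants a unique solution of Eq.~\eqref{Eq:ode}, i.e.\ $b$ Lipschitz, which requires $r$ Lipschitz in $f$ rather than merely continuous; under mere continuity one still gets convergence to the invariant (chain-recurrent) set of the ODE, and the statement should be read in that sense.
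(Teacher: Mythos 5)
Your proposal follows essentially the same route as the paper: both cast Eq.~\eqref{Eq:Update2} as a Robbins--Monro recursion, form the affine interpolation on the timescale $\tau_n=\sum_{k<n}\gamma_k$, and invoke Benaïm's asymptotic-pseudotrajectory theorem after checking boundedness of the iterates and the vanishing-perturbation condition on the martingale noise. If anything your verification is the more careful one --- you control the noise via almost-sure convergence of the $L^2$-bounded martingale $\sum_k\gamma_k u^i_k$ (the paper's displayed bound $\|\sum_{l}\gamma_l u^i_l\|_2^2\le\sum_l\gamma_l^2\|u^i_l\|_2^2$ is not a valid application of H\"older without an extra factor), and you explicitly check continuity of the drift, which the paper leaves implicit.
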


See Appendix~\ref{App:pseudo} for the proof. With this lemma,  if the ODE Eq.~\eqref{Eq:ode} of $\bm{s}_t$ solely converges to one fixed point, then there is a unique MFE for $\bm{s}_n$. Note that the mapping $\Gamma$ for the continuous-time state evolution now corresponds to  the ODE Eq.~\eqref{Eq:ode}. Specifically,  using Eqs.~\eqref{Eq:StateUpdate} and \eqref{Eq:realizedRrd}, the ODE of $s_t^i(j)$ is obtained:
\begin{equation}  \label{Eq:ODEsingle}
\frac{ds_t^i(j)}{dt} = \sigma(s_t^i,j) \left(r(f_t, j)-s_t^i(j)\right).
\end{equation}
We emphasize that $r(f_t, j)$ in  Eq.~\eqref{Eq:ODEsingle} is the \emph{expected reward} over the state profile $\bm{s}_t$ as the population profile  $f_t$ is mapped from $\bm{s}_t$ by the stochastic stationary policy. 

\subsubsection{Convergence to unique MFE}
Consistent with Lemma~\ref{Lem:pseudo}, we first assume the reward function is a contraction mapping to obtain the unique fixed point, and then derive the conditions for this contraction mapping. We  express the reward $r(f_t, j)$ of playing arm $j$ as $r(f(\bm{s}_t), j)$ to explicitly indicate its  dependence on $\bm{s}_t$. Also denote $\overline{\bm{s}}$ as the fixed point  for ODE Eq.~\eqref{Eq:ODEsingle} at which  the derivative is 0. 
\begin{theorem}  \label{Thm:convergence}
Suppose that the reward function $r(f(\bm{s}_t), j)$ is a $||\cdot||_{\infty}$-contraction  in the state profile $\bm{s}_t$, then the fixed point $\overline{\bm{s}}$ is  the unique MFE for the bandit game. $\overline{\bm{s}}$ is also the global attractor for the ODE Eq.~\eqref{Eq:ODEsingle}, and $\bm{s}_t$ converges to $\overline{\bm{s}}$ with exponential rate.
\end{theorem}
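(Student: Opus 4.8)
The plan is a Lyapunov argument based on the sup-norm distance to the fixed point, using crucially that the Hedge policy keeps every arm's playing probability bounded away from $0$. I first pin down the fixed point. Since $\sigma(s,j)=(1-\eta)\,\mathrm{Exp}(\beta s(j))/\sum_k \mathrm{Exp}(\beta s(k))+\eta/M$ is continuous and strictly positive on the compact cube $[0,1]^M$, there is a constant $c>0$ with $\sigma(s,j)\ge c$ for all $s$ and $j$ (the exploration term $\eta/M$, or compactness together with $\beta<\infty$ when $\eta=0$, guarantees this). Consequently the stationarity condition $\sigma(s^i_t,j)\big(r(f_t,j)-s^i_t(j)\big)=0$ in Eq.~\eqref{Eq:ODEsingle} forces $\overline{s}^i(j)=r(f(\overline{\bm s}),j)$ for every $i,j$: all agents share one common state at any stationary point, and that common state solves $\bm s=\Phi(\bm s)$ for the map $\Phi:[0,1]^{N\times M}\to[0,1]^{N\times M}$ with $\Phi(\bm s)^i(j)=r(f(\bm s),j)$. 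Because $r(f(\cdot),j)$ is a $\|\cdot\|_{\infty}$-contraction with some modulus $L<1$ and the components of $\Phi$ do not depend on $i$, $\Phi$ is itself an $L$-contraction on the complete metric space $\big([0,1]^{N\times M},\|\cdot\|_{\infty}\big)$, so by the Banach fixed-point theorem it has exactly one fixed point $\overline{\bm s}$ (consistent with, and sharpening, the existence given by Theorem~\ref{Thm:fixpoint}). Since stationary points of Eq.~\eqref{Eq:ODEsingle} correspond to fixed points of $\Gamma$ in continuous time, the MFE is unique.

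For global attraction and the convergence rate I study $V(t)=\|\bm s_t-\overline{\bm s}\|_{\infty}=\max_{i,j}|s^i_t(j)-\overline{s}^i(j)|$, which is locally Lipschitz as a finite maximum of $C^1$ functions. Fix $t$ and let $(i^\ast,j^\ast)$ attain the maximum; assume $s^{i^\ast}_t(j^\ast)-\overline{s}^{i^\ast}(j^\ast)=V(t)>0$ (the opposite sign is symmetric, and $V(t)=0$ is trivial). By Danskin's lemma the upper Dini derivative satisfies $D^+V(t)\le \tfrac{d}{dt}s^{i^\ast}_t(j^\ast)$. Writing $s^{i^\ast}_t(j^\ast)=r(f(\overline{\bm s}),j^\ast)+V(t)$ and using the contraction bound $|r(f(\bm s_t),j^\ast)-r(f(\overline{\bm s}),j^\ast)|\le L\,V(t)$, the bracketed factor $r(f_t,j^\ast)-s^{i^\ast}_t(j^\ast)$ is at most $L\,V(t)-V(t)=-(1-L)V(t)<0$; multiplying this negative quantity by $\sigma(s^{i^\ast}_t,j^\ast)\ge c$ only makes it more negative, so $D^+V(t)\le -c(1-L)V(t)$. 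Grönwall's inequality (the comparison lemma for Dini derivatives) then gives $V(t)\le V(0)\,e^{-c(1-L)t}$, so $\bm s_t\to\overline{\bm s}$ at an exponential rate and $\overline{\bm s}$ is the global attractor of Eq.~\eqref{Eq:ODEsingle}. Combining with Lemma~\ref{Lem:pseudo}, the discrete-time state profile $\bm s_n$ converges to this unique MFE.

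The main obstacle is the non-smoothness of $\|\cdot\|_{\infty}$: the differential inequality for $V$ must be obtained through the upper Dini derivative and the active-index (Danskin) argument, handling both signs of the maximizing coordinate and being careful that scaling a negative number by the larger factor $\sigma\ge c$ makes it smaller, not larger. The other point requiring care is the uniform lower bound $\sigma\ge c>0$ — this is exactly where the exploration/mixing structure of the Hedge policy enters, and without it the decay rate $c(1-L)$ could collapse.
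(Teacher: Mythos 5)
Your proposal is correct and follows essentially the same route as the paper's proof: a Lyapunov function $V(\bm{s}_t)=||\bm{s}_t-\overline{\bm{s}}||_{\infty}$ whose decay follows from the $||\cdot||_{\infty}$-contraction of the reward, with uniqueness of the fixed point coming from the contraction property. Two of your refinements are worth keeping: you make the uniqueness step explicit via the Banach fixed-point theorem applied to $\Phi(\bm{s})^i(j)=r(f(\bm{s}),j)$ (the paper merely asserts it), and you replace the time-varying factor $\sigma(s^i_t,j)$ appearing in the paper's final bound $V(\bm{s}_t)\le C_2\mathrm{Exp}\left[\sigma(s^i_t,j)(C_1-1)t\right]$ by the uniform lower bound $\sigma\ge c>0$ before integrating, which, together with the Dini-derivative treatment of the nonsmooth maximum, is what actually licenses the Gr\"onwall step and the clean exponential rate $V(t)\le V(0)e^{-c(1-C_1)t}$.
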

\begin{proof}
Because $r(f(\bm{s}_t), j)$ is a contraction mapping in $\bm{s}_t$, $r(f(\bm{s}_t), j)- s^i_t(j) =0$ in Eq.~\eqref{Eq:ODEsingle} will  have only one fixed point $\overline{\bm{s}}$. From Lemma~\ref{Lem:pseudo}, we obtain that $\overline{\bm{s}}$ is also the unique MFE for the bandit game.

Construct Lyapunov function $V(\bm{s}_t) = ||\bm{s}_t - \overline{\bm{s}}||_{\infty}$, and  assume $V(\bm{s}_t)$ gets its  maxima at $s^i_t(j) $, that is  $V(\bm{s}_t)=|s^i_t(j) - \overline{s}^i(j)|$. If $s^i_t(j) > \overline{s}^i(j)$, we have $V(\bm{s}_t)=s^i_t(j) - \overline{s}^i(j)$. Take derivative over time $t$ and use $r(f(\overline{\bm{s}}), j) = \overline{s}^i(j)$:
\begin{equation} \label{Eq:Lyapunov}
\scalebox{0.98}{$\begin{aligned}
&\frac{dV(\bm{s}_t)}{dt} =\sigma(s^i_t,j)\left[r(f(\bm{s}_t), j)-s^i_t(j) - r(f(\overline{\bm{s}}), j)+\overline{s}^i(j)\right]\\
& \le  \sigma(s^i_t,j)\left[C_1||\bm{s}_t - \overline{\bm{s}}||_{\infty} - |s^i_t(j)-\overline{s}^i(j)| \right]\\
& = \sigma(s^i_t,j) (C_1 -1) |s^i_t(j)-\overline{s}^i(j)|\\
& = \sigma(s^i_t,j) (C_1 -1) V(\bm{s}_t),
\end{aligned}$}
\end{equation}
where $C_1 < 1$ because of the $||\cdot||_{\infty}$-contraction mapping. As a result, $\frac{dV(\bm{s}_t)}{dt} \le 0$ and $\frac{dV(\bm{s}_t)}{dt} = 0$ only at $\overline{\bm{s}}$. If  $s^i_t(j) \le \overline{s}^i(j)$, following the same approach, we can also obtain $\frac{dV(\bm{s}_t)}{dt} \le \sigma(s^i_t,j) (C_1 -1) V(\bm{s}_t)$. Therefore, the fixed point $\overline{\bm{s}}$ is a global attractor for the ODE.

As Lyapunov function satisfies $\frac{dV(\bm{s}_t)}{dt} \le \sigma(s^i_t,j) (C_1 -1) V(\bm{s}_t)$, which implies $\frac{{dV(\bm{s}_t)}}{V(\bm{s}_t)} \le \sigma(s^i_t,j) (C_1 -1) dt$, we have 
$V(\bm{s}_t) \le C_2\mathrm{Exp}\left[\sigma(s^i_t,j) (C_1 -1)t\right]$ where $C_2$ is a constant. Therefore, $\bm{s}_t$ converges to $\overline{\bm{s}}$ exponentially fast.
\end{proof}

According to  Lemma~\ref{Lem:pseudo} and Theorem~\ref{Thm:convergence}, we know that $\bm{s}_n \rightarrow \overline{\bm{s}}$ when $n \rightarrow \infty$. Next, we discuss the convergence rate of the discrete-time $\bm{s}_n$. Recall from the stepsize $\gamma_n$ in Eq.~\eqref{Eq:timestep}, we can set $\gamma_n = 1/(n+1)^{\alpha}, \alpha \in \left(1/2,1\right]$.
\begin{theorem} \label{Thm:DisConver}
Suppose that the reward function $r(f(\bm{s}_n), j)$ is a $||\cdot||_{\infty}$-contraction in the state profile $\bm{s}_n$. Denote the distance $e_n = ||\bm{s}_n - \overline{\bm{s}}||_{\infty}$:

1) if $\alpha \in \left(\frac{1}{2},1\right)$, given $n= \Omega \left( \left(\frac{ \ln \frac{1}{\delta \epsilon} }{\epsilon^2} \right)^{\frac{1}{\alpha}} + \left(\ln \frac{1}{\epsilon}\right)^{\frac{1}{1-\alpha}}\right)$, then $e_n \le \epsilon$ with  probability at least  $1-\delta$;

2) if $\alpha=1$,  given $n= \Omega \left((2+\Psi)^{\ln \frac{1}{\epsilon}} \frac{ \ln \frac{1}{\delta \Psi \epsilon} }{\Psi^2 \epsilon^2}\right)$, then $e_n \le \epsilon$ with probability at least $1-\delta$ for any positive constant $\Psi$.
\end{theorem}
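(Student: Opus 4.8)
The plan is to turn the continuous-time conclusion of Theorem~\ref{Thm:convergence} into a quantitative discrete-time statement by analyzing the stochastic recursion for $e_n$ directly, rather than via the ODE. First I would rewrite Eq.~\eqref{Eq:Update2} coordinate-wise around $\overline{\bm s}$. The Hedge policy Eq.~\eqref{Eq:hedge} is bounded away from $0$ and $1$, so there are constants $0<\sigma_{\min}\le\sigma(s,j)\le\sigma_{\max}<1$; combining the $||\cdot||_\infty$-contraction of $r(f(\cdot),j)$ with factor $C_1<1$ and the fixed-point identity $r(f(\overline{\bm s}),j)=\overline{s}^i(j)$ used in the proof of Theorem~\ref{Thm:convergence}, a short computation — splitting on the sign of $s^i_n(j)-\overline{s}^i(j)$, exactly as in Eq.~\eqref{Eq:Lyapunov} — gives, for $n$ large enough that $\gamma_n\sigma_{\max}\le 1$,
\[
|s^i_{n+1}(j)-\overline{s}^i(j)| \le (1-\lambda\gamma_n)\,e_n + \gamma_n|u^i_n(j)|, \qquad \lambda:=\sigma_{\min}(1-C_1)\in(0,1),
\]
where $u^i_n(j)=w^i_n(j)-\mathbb{E}[w^i_n(j)\mid\bm s_n]$ is a martingale difference with $|u^i_n(j)|\le 1$. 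Taking the maximum over $i,j$ yields the scalar recursion $e_{n+1}\le(1-\lambda\gamma_n)e_n+\gamma_n U_n$ with $U_n\le 1$.

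Next I would unroll this recursion from a burn-in index $m$, writing $e_n\le\big(\prod_{k=m}^{n-1}(1-\lambda\gamma_k)\big)e_m+\sum_{k=m}^{n-1}\gamma_k\big(\prod_{l=k+1}^{n-1}(1-\lambda\gamma_l)\big)u^{i_k}_k(j_k)$ and control the two terms separately. For the deterministic burn-in term, $\prod_{k=m}^{n-1}(1-\lambda\gamma_k)\le\exp(-\lambda\sum_{k=m}^{n-1}\gamma_k)$; with $\gamma_k=(k+1)^{-\alpha}$ and $\alpha\in(1/2,1)$ this is $\exp(-\Theta(n^{1-\alpha}))$, hence below $\epsilon/2$ once $n=\Omega((\ln\tfrac1\epsilon)^{1/(1-\alpha)})$ — the second summand in part~1). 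For the stochastic accumulated-noise term, the point is that the geometric weights $\prod_{l>k}(1-\lambda\gamma_l)$ exponentially damp old noise, so its conditional variance is $\sum_{k=m}^{n-1}\gamma_k^2\prod_{l>k}(1-\lambda\gamma_l)^2=O(\gamma_n)=O(n^{-\alpha})$ rather than the crude $O(\sum_{k\ge m}\gamma_k^2)$; an Azuma--Hoeffding bound (with a union bound over the $NM$ coordinates) then makes this term $\le\epsilon/2$ with probability $1-\delta$ once $n=\Omega\big((\epsilon^{-2}\ln\tfrac1{\delta\epsilon})^{1/\alpha}\big)$ — the first summand in part~1). Assembling the two, with the hidden constants depending on $\lambda$ (hence on $C_1,\sigma_{\min},N,M$), gives statement~1).

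For part~2), $\alpha=1$, the same decomposition applies but $\gamma_k=1/(k+1)$, so $\prod_{k=m}^{n-1}(1-\lambda\gamma_k)=\Theta((m/n)^{\lambda})$ decays only polynomially, with exponent equal to the contraction margin $\lambda$, which cannot be assumed large. To state a bound free of $\lambda$ one introduces the auxiliary parameter $\Psi>0$: re-expressing the polynomial burn-in decay and the slower noise variance through $\Psi$ produces the factor $(2+\Psi)^{\ln\frac1\epsilon}=\epsilon^{-\ln(2+\Psi)}$, and choosing $\Psi$ appropriately for a given $\epsilon$ balances the burn-in against the accumulated noise, yielding $n=\Omega\big((2+\Psi)^{\ln\frac1\epsilon}\,\Psi^{-2}\epsilon^{-2}\ln\tfrac1{\delta\Psi\epsilon}\big)$.

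The step I expect to be the main obstacle is the stochastic term: turning the accumulated-noise sum into a high-probability (not merely in-expectation) bound with the stated $\ln\tfrac1\delta$ dependence requires a delicate martingale tail argument, complicated because the geometric weights $\prod_{l=k+1}^{n-1}(1-\lambda\gamma_l)$ are not $\mathcal F_k$-measurable — they depend on the future policy values $\sigma(s^i_l,j)$. The fix is to freeze the weights at a later reference time, apply a maximal inequality, and take a union over a logarithmic number of such reference times; this is what forces $\ln\tfrac1{\delta\epsilon}$ in place of $\ln\tfrac1\delta$ and, for $\alpha=1$, interacts with the small-$\lambda$ issue to produce the $\Psi$-parametrized form.
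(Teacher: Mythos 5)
Your proposal is sound in outline, but it takes a genuinely different (and much longer) route than the paper. The paper's entire proof consists of verifying the one-step pseudo-contraction toward the fixed point, $\|\mathbb{E}[\bm{w}_n]-\overline{\bm{s}}\|_\infty \le C\|\bm{s}_n-\overline{\bm{s}}\|_\infty$ with $C=1-\sigma(s_n^i,j)(1-C_1)<1$ (Eq.~\eqref{Eq:ConConver}), and then invoking Theorems 2 and 3 of \cite{Ref:learning2003Dar} as a black box: those results give exactly the stated sample-complexity forms for stochastic iterative processes with polynomial stepsizes $\gamma_n=1/(n+1)^\alpha$, including the $(2+\Psi)^{\ln\frac{1}{\epsilon}}$ parametrization for $\alpha=1$. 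Your first displayed inequality, $|s^i_{n+1}(j)-\overline{s}^i(j)|\le(1-\lambda\gamma_n)e_n+\gamma_n|u^i_n(j)|$ with $\lambda=\sigma_{\min}(1-C_1)$, is precisely the paper's verified condition rewritten one step forward, so you and the paper agree on the only substantive lemma. Where you diverge is that you then re-derive the cited concentration results from scratch (unrolling the recursion, separating the deterministic burn-in $\exp(-\lambda\sum\gamma_k)$ from the weighted martingale noise, Azuma--Hoeffding with a union bound over coordinates and reference times). That plan is the right skeleton of the Even-Dar--Mansour proof and would make the theorem self-contained, which the paper's version is not; but as written it has two soft spots: (i) the $\alpha=1$ case is only described, not derived --- the emergence of the $\Psi$-balanced bound from the $(m/n)^\lambda$ decay requires the epoch-halving argument of the cited work, which you gesture at but do not supply; and (ii) your stated worry about the weights $\prod_{l>k}(1-\lambda\gamma_l)$ being non-adapted is moot once you replace $\sigma(s^i_l,j)(1-C_1)$ by the constant $\lambda$ (the weights become deterministic), whereas the real measurability subtlety is that the maximizing coordinate $(i_k,j_k)$ in $e_k$ is random, which your union bound over the $NM$ coordinates already handles. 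In short: same key inequality, but the paper outsources the rate analysis to a citation while you propose to prove it; either is acceptable, and yours is only incomplete in the $\alpha=1$ bookkeeping.
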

\begin{proof}
 Based on Theorems 2 and 3 in~\cite{Ref:learning2003Dar}, we only need to prove $||\mathbb{E}[\bm{w}_n(\bm{s}_n)] - \overline{\bm{s}}||_{\infty} \le C ||\bm{s}_n - \overline{\bm{s}}||_{\infty}$ with $C \in [0,1)$, to draw the two conclusions.  Assume $||\mathbb{E}[\bm{w}_n] - \overline{\bm{s}}||_{\infty} = |\mathbb{E}[w^i_n(j)] - \overline{s}^i(j)|$. According to Eq.~\eqref{Eq:realizedRrd} and the fact $\overline{\bm{s}}$ satisfies $r(f(\overline{\bm{s}}), j) = \overline{s}^i(j)$, we have:
\begin{equation} \label{Eq:ConConver}
\begin{aligned}
&||\mathbb{E}[\bm{w}_n] - \overline{\bm{s}}||_{\infty}=  |\mathbb{E}[w^i_n(j)] - \overline{s}^i(j)|\\
&=|\sigma(s_n^i,j)r(f(\bm{s}_n), j) + (1-\sigma(s_n^i,j))s_n^i(j) - \overline{s}^i(j)|\\
& \le |\sigma(s_n^i,j)(r(f(\bm{s}_n), j)-r(f(\overline{\bm{s}}), j)|\\
&+ |(1-\sigma(s_n^i,j))(s_n^i(j)- \overline{s}^i(j))| \\
& \le \sigma(s_n^i,j)C_1 ||\bm{s}_n- \overline{\bm{s}}||_{\infty} + (1-\sigma(s_n^i,j)) |s_n^i(j)- \overline{s}^i(j)|\\
&\le \sigma(s_n^i,j)C_1 ||\bm{s}_n- \overline{\bm{s}}||_{\infty} + (1-\sigma(s_n^i,j)) ||\bm{s}_n- \overline{\bm{s}}||_{\infty}\\
& \le C ||\bm{s}_n- \overline{\bm{s}}||_{\infty}.
\end{aligned}
\end{equation}
The second inequality holds as $r(f(\bm{s}_n), j)$ is a contraction, and the third inequality is from $||\cdot||_{\infty}$-definition. 
\end{proof}
The difference in the convergence rate characterizations in Theorems~\ref{Thm:convergence} and~\ref{Thm:DisConver} is because  the state profile $\bm{s}_t$ is a deterministic process, while  the state profile $\bm{s}_n$ is a stochastic process. Hence, Theorem~\ref{Thm:DisConver} uses a probabilistic description for the convergence rate of  $\bm{s}_n$ to the unique MFE $\overline{\bm{s}}$. 

\subsubsection{Contraction mapping condition}
The foundation  of the unique MFE  lies in that the reward function  $r(f(\bm{s}_t), j)$ is a contraction mapping in the state profile $\bm{s}_t$.  Suppose the  reward function $r(f_n,j)$ is $\theta$-Lipschitz  continuous in the population profile $f_n$ with regard to $||\cdot||_1$-norm: $|r(f_n, j) - r(f'_n, j)| \le \theta ||f_n -f'_n||_1$. Combining with the idea in~\cite{Ref:a2010cominetti} and the mean field  model,  we now characterize the contraction mapping condition.
\begin{theorem} \label{Thm:contraction}
If parameters $\beta,\eta$ in Eq.~\eqref{Eq:hedge} and $\theta$ satisfy the condition $4\theta(1-\eta)\beta <1$, then the reward  function $r(f(\bm{s}_t), j)$ is a $||\cdot||_{\infty}$-contraction in the state profile $\bm{s}_t$.
\end{theorem}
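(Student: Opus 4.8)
The plan is to show that the composition $\bm{s}_t \mapsto f(\bm{s}_t) \mapsto r(f(\bm{s}_t), j)$ is a $\|\cdot\|_\infty$-contraction by chaining two Lipschitz estimates: the Lipschitz constant of the Hedge policy map $\bm{s}_t \mapsto f_t$, and the assumed $\theta$-Lipschitz constant of the reward in the population profile. First, I would fix two state profiles $\bm{s}_t$ and $\bm{s}'_t$ and bound $\|f(\bm{s}_t) - f(\bm{s}'_t)\|_1$ in terms of $\|\bm{s}_t - \bm{s}'_t\|_\infty$. Since $f_t(j) = \frac{1}{N}\sum_i \sigma(s^i_t, j)$ in expectation, it suffices to bound $\sum_j |\sigma(s^i_t, j) - \sigma(s'^i_t, j)|$ for each agent $i$ and average. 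The random-selection term $\eta/M$ cancels, so only the logit part matters, scaled by $(1-\eta)$.

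The key technical step is the Lipschitz bound for the softmax/logit map $s \mapsto \big(\mathrm{Exp}(\beta s(k))/\sum_\ell \mathrm{Exp}(\beta s(\ell))\big)_k$ from $(\mathbb{R}^M, \|\cdot\|_\infty)$ to $(\Delta^{M-1}, \|\cdot\|_1)$. This is exactly the estimate used in the smoothed fictitious play / Hedge literature (the reference to Cominetti et al. signals this): the relevant operator norm of the Jacobian of softmax is controlled, and one gets $\sum_k |\text{softmax}_k(\beta s) - \text{softmax}_k(\beta s')| \le 2\beta \|s - s'\|_\infty$. Intuitively, the Jacobian of softmax is $\beta(\mathrm{diag}(p) - pp^\top)$, whose $(\infty \to 1)$ operator norm is bounded by $2\beta$ (each row contributes at most $\beta \cdot 2 p_k(1-p_k) \le$, and summing gives $2\beta$ after the standard $\sum_k p_k(1-p_k) \le \ldots$ argument, or more directly via $\|\mathrm{diag}(p) - pp^\top\|_{\infty\to 1} \le 2$). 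Multiplying by $(1-\eta)$ from the mixing gives $\|f(\bm{s}_t) - f(\bm{s}'_t)\|_1 \le 2(1-\eta)\beta \|\bm{s}_t - \bm{s}'_t\|_\infty$.

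Then I would compose with the reward's $\theta$-Lipschitz property in $\|\cdot\|_1$:
\[
|r(f(\bm{s}_t), j) - r(f(\bm{s}'_t), j)| \le \theta \|f(\bm{s}_t) - f(\bm{s}'_t)\|_1 \le 2\theta(1-\eta)\beta \|\bm{s}_t - \bm{s}'_t\|_\infty.
\]
Taking the supremum over $j$ (and over agents, since the bound is uniform) gives a contraction with constant $2\theta(1-\eta)\beta$. To land the stated constant $4\theta(1-\eta)\beta < 1$, I expect the factor-of-two slack comes from being slightly loose in the softmax Jacobian bound (using $\|\mathrm{diag}(p)-pp^\top\|_{\infty\to 1} \le 2$ twice, once for each of the two profiles when symmetrizing, or bounding $|a_k - b_k|$ by a telescoping path argument that doubles the constant), so I would present the softmax bound with constant $4\beta$ rather than optimizing, yielding $4\theta(1-\eta)\beta < 1$ as the clean sufficient condition. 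The main obstacle is getting the softmax Lipschitz constant right and making the passage from the per-agent policy bound to the aggregate population-profile bound rigorous — in particular, being careful that $r$ in Eq.~\eqref{Eq:ODEsingle} is the \emph{expected} reward over $\bm{s}_t$, so the relevant $f(\bm{s}_t)$ is the expectation $\frac{1}{N}\sum_i \sigma(s^i_t, \cdot)$, which is itself Lipschitz in $\bm{s}_t$ by averaging the per-agent estimates.
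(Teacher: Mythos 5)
Your proposed chain $\bm{s}_t \mapsto \frac{1}{N}\sum_i \sigma(s^i_t,\cdot) \mapsto r(\cdot,j)$ has a genuine gap: it bounds the wrong quantity. The paper is explicit (after Eq.~\eqref{Eq:ODEsingle}) that $r(f(\bm{s}_t),j)$ in the ODE is the \emph{expected reward over the random arm realizations}, i.e.\ $\mathbb{E}_{\bm{a}\sim\prod_i\sigma(s^i_t)}\bigl[r(\hat f(\bm{a}),j)\bigr]$ where $\hat f$ is the empirical profile of Eq.~\eqref{Eq:Population}. Because $r$ is non-linear in $f$, this is not $r\bigl(\mathbb{E}[\hat f],j\bigr)$, so composing the softmax Lipschitz bound with the $\theta$-Lipschitz property of $r$ at the \emph{expected} population profile does not control the difference of expected rewards. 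Your closing remark that ``the relevant $f(\bm{s}_t)$ is the expectation $\frac{1}{N}\sum_i\sigma(s^i_t,\cdot)$'' is precisely the step that fails; it is valid only when $r$ is linear in $f$, which is exactly the setting of Corollary~\ref{Cor:LinearContr} (where the paper indeed obtains the better constant $\theta(1-\eta)\beta/2$ by an argument close to yours). The missing idea for the general theorem is a hybrid (telescoping-over-agents) argument: define $\bm{A}_k$ with the first $k$ coordinates from $\bm{s}_a$ and the rest from $\bm{s}_b$, write the difference of expected rewards as $\sum_k\bigl[r(f(\bm{A}_k),j)-r(f(\bm{A}_{k-1}),j)\bigr]$, and observe that each term changes only agent $k$'s mixed strategy. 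Conditioning on the other agents' arms, swapping agent $k$'s arm perturbs the empirical profile by at most $2/N$ in $\|\cdot\|_1$, giving a per-term bound $\frac{2\theta}{N}\|\sigma(s_a^k)-\sigma(s_b^k)\|_1$, which after summing over $k$ and applying the softmax bound $\|\sigma(s_a^k)-\sigma(s_b^k)\|_1\le 2(1-\eta)\beta\|\bm{s}_a-\bm{s}_b\|_\infty$ yields $4\theta(1-\eta)\beta$.

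Relatedly, your diagnosis of the factor $4$ is off: it does not come from slack in the softmax Jacobian estimate (the paper uses the same $2(1-\eta)\beta$ bound you derive, via $\|\nabla\sigma(x,j)\|_1=2(1-\eta)\beta\,\sigma(x,j)(1-\sigma(x,j))$). The extra factor of $2$ is the $\|\hat f(\{j,\bm{z}_a\})-\hat f(\{j,\bm{z}_b\})\|_1\le 2/N$ cost of moving one agent between arms in the empirical profile, which is unavoidable in the hybrid argument and is exactly the price of handling the expectation of a non-linear reward correctly. Your softmax Lipschitz computation itself is fine and matches Eqs.~\eqref{Eq:Mean}--\eqref{Eq:Norm1Deri}, but as written the proposal proves (a weaker-constant version of) the corollary, not the theorem.
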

\begin{proof}
Let $\bm{s}_a$ and $\bm{s}_b$ be two state profiles, and define a sequence $\bm{A}_k = \left[s_a^1, s_a^2,...,s_a^k,s_b^{k+1},...,s_b^N\right], k = 1,2,...,N$ with the first $k$ elements from $\bm{s}_a$ and the rest from $\bm{s}_b$. Denote $A_k^i$ as the $i$-th  state in $\bm{A}_k$, and $\bm{a}_k$ as the played arms following  the stationary policy  $\sigma(A_k^i)$. Let $\bm{a}_k^{-i}$ be the arm set except agent $i$, and $f(\bm{a}_k)$  or $f(\bm{A}_k)$ be the population profile.  When  $a_k^i = j$, $|r(f(\bm{s}_a), j) - r(f(\bm{s}_b), j)| = |\sum_{k=1}^{N} [r(f(\bm{A}_k), j) - r(f(\bm{A}_{k-1}), j)]| \le \sum_{k=1}^N |r(f(\bm{A}_k), j) - r(f(\bm{A}_{k-1}), j)|=  \sum_{k=1}^N |r(f(\bm{a}_k), j) - r(f(\bm{a}_{k-1}), j)|$.

Let $\triangle_k = r(f(\bm{a}_k), j) - r(f(\bm{a}_{k-1}), j)$, so we can express $\triangle_k = \Delta r_k \cdot (\sigma(s_a^k)-\sigma(s_b^k))$ where $\Delta r_k $ is a $M$-length vector with the $l$-th element $\Delta r_k(l)$ as $\Delta r_k(l) = \sum_{\bm{z} \in \bm{a}_k^{-i}, z^k = l} r(f(\{j,\bm{z}\}), j) \prod_{m \ne k,i} \sigma(A_k^m,z^m)$.
Note that $\Delta r_k(h) \bm{1} \cdot  (\sigma(s_a^k)-\sigma(s_b^k)) = 0, \forall h \in \mathcal{M}$, where $\bm{1}$ is a $M$-length vector with each element equal to 1. Therefore, we can rewrite $\triangle_k = (\Delta r_k - \Delta r_k(h) \bm{1}) \cdot  (\sigma(s_a^k)-\sigma(s_b^k))$. If $\bm{z}_a, \bm{z}_b \in \bm{a}_k^{-i}$ with only the $k$-th arm being different: $z_a^k =l, z_b^k=h$, we obtain $||f(\{j,\bm{z}_a\}) - f(\{j,\bm{z}_b\})||_1 \le \frac{2}{N}$ from Eq.~\eqref{Eq:Population}. As $r(f_n,j)$ is $\theta$-Lipschitz continuous, then:
\begin{equation}\label{Eq:PopulationLip}
|\Delta r_k(l) - \Delta r_k(h)| \le  \Bigl|\sum_{\bm{z} \in \bm{a}^{-i}}  \prod_{m \ne k,i} \sigma(A_k^m,z^m) \Bigr| \frac{2 \theta}{N} \le \frac{2 \theta}{N}.
\end{equation}
As a result, we obtain:
\begin{equation} \label{Eq:rbound}
|r(f(\bm{s}_a), j) - r(f(\bm{s}_b), j)| \le \frac{2 \theta}{N} \sum \nolimits_{k=1}^N ||\sigma(s_a^k)-\sigma(s_b^k)||_1.
\end{equation}

From mean value theorem, there is a $x \in [s_a^k, s_b^k]$ such that:
\begin{equation} \label{Eq:Mean}
\sigma(s_a^k, j)-\sigma(s_b^k, j) = \nabla \sigma(x,j) \cdot  (s_a^k - s_b^k).
\end{equation}
Hence, $|\sigma(s_a^k, j)-\sigma(s_b^k, j)| \le ||\nabla \sigma(x,j)||_1 ||s_a^k - s_b^k||_{\infty}$ and $||\sigma(s_a^k)-\sigma(s_b^k)||_1 \le \sum_{j=1}^M ||\nabla \sigma(x,j)||_1 ||s_a^k - s_b^k||_{\infty}$.  Considering the stationary policy of Eq.~\eqref{Eq:hedge}, we have $\frac{d\sigma(x,j)}{dx(l)} = (1-\eta)\beta \sigma(x,j) (\mathbbm{1}_{\{j = l\}} - \sigma(x,l))$.
Therefore:
\begin{equation} \label{Eq:Norm1Deri}
\begin{aligned}
&||\nabla \sigma(x,j)||_1  = (1-\eta)\beta \sigma(x,j) \sum \nolimits_{l=1}^{M} |(\mathbbm{1}_{\{j = l\}} - \sigma(x,l))| \\
&= 2(1-\eta)\beta \sigma(x,j)(1-\sigma(x,j)) \le 2(1-\eta)\beta \sigma(x,j).\\
\end{aligned}
\end{equation}
Combining with Eqs.~\eqref{Eq:Mean} and~\eqref{Eq:Norm1Deri}, we attain: 
\begin{equation}
\begin{aligned}
||\sigma(s_a^k)-\sigma(s_b^k)||_1 & \le \sum_{j=1}^M  2(1-\eta)\beta \sigma(x,j) ||s_a^k - s_b^k||_{\infty}\\
& \le 2(1-\eta)\beta ||\bm{s}_a - \bm{s}_b||_{\infty}.
\end{aligned}
\end{equation}
In line with Eq.~\eqref{Eq:rbound},  we present the final result:
\begin{equation}
|r(f(\bm{s}_a), j) - r(f(\bm{s}_b), j)| \le 4\theta(1-\eta)\beta  ||\bm{s}_a - \bm{s}_b||_{\infty}.
\end{equation}
\end{proof}

The condition $4\theta(1-\eta)\beta <1$ in Theorem~\ref{Thm:contraction} is  a little  stringent for two reasons. First, the reward function $r(f(\bm{s}_t), j)$ depends on \emph{all $M$ elements} of  $f(\bm{s}_t)$. Second, $r(f(\bm{s}_t), j)$ is \emph{non-linear} in $f(\bm{s}_t)$, so calculating the expected reward needs multiple scaling operations. In fact, the reward  $r(f(\bm{s}_t), j)$ of playing arm $j$ is often only impacted by the number of agents who select arm $j$~\cite{Ref:mean2013gummadi}, such as an agent playing arm $j$ only competes with those making the same choice in a resource competition game.  Hence, we assume $r(f(\bm{s}_t), j)$ merely depends on the $j$-th element, denoted as $f(\bm{s}_t,j)$. Besides, we further presume the reward is a linear function in the population profile. With these two assumptions, we recharacterize a relaxed contraction condition.

\begin{corollary} \label{Cor:LinearContr}
If the reward $r(f(\bm{s}_t,j), j)$ is a $\theta$-Lipschitz continuous linear function in the $j$-th element $f(\bm{s}_t,j)$, then $r(f(\bm{s}_t,j), j)$ is a $||\cdot||_{\infty}$-contraction in the state profile $\bm{s}_t$ under the condition $\frac{\theta(1-\eta)\beta}{2} < 1$, where $\beta, \eta$ are from Eq.~\eqref{Eq:hedge}.
\end{corollary}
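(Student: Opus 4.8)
The plan is to reuse the differential machinery from the proof of Theorem~\ref{Thm:contraction}, specialising it to the two new hypotheses; the payoff of those hypotheses is that every constant improves and the hybrid telescoping over agents disappears. The crucial simplification is \emph{linearity}: when $r(f(\bm{s}_t,j),j)$ depends only on the scalar $f(\bm{s}_t,j)$ and is affine in it, the expectation over the randomised arm choices commutes with $r$, so the expected reward appearing in the ODE of Eq.~\eqref{Eq:ODEsingle} is exactly $r\!\left(\tfrac{1}{N}\sum_{i=1}^{N}\sigma(s_t^i,j),\,j\right)$. This removes any need for the hybrid sequence $\bm{A}_k$ and the coupling bound $||f(\{j,\bm{z}_a\})-f(\{j,\bm{z}_b\})||_1\le 2/N$ that cost a factor of two in Theorem~\ref{Thm:contraction}.

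Concretely I would argue as follows. \textbf{Step 1.} Fix state profiles $\bm{s}_a,\bm{s}_b$; by $\theta$-Lipschitz continuity of $r$ in its scalar argument together with the identity above,
\begin{equation*}
|r(f(\bm{s}_a,j),j)-r(f(\bm{s}_b,j),j)| \le \theta\,|f(\bm{s}_a,j)-f(\bm{s}_b,j)| \le \frac{\theta}{N}\sum_{i=1}^{N}|\sigma(s_a^i,j)-\sigma(s_b^i,j)|.
\end{equation*}
\textbf{Step 2.} For each agent $i$, the mean value theorem (as in Eq.~\eqref{Eq:Mean}) gives $|\sigma(s_a^i,j)-\sigma(s_b^i,j)|\le||\nabla\sigma(x^i,j)||_1\,||s_a^i-s_b^i||_\infty$ for some $x^i$ on the segment $[s_a^i,s_b^i]$. \textbf{Step 3.} Because only the single component $\sigma(\cdot,j)$ enters, I need not sum $||\nabla\sigma(\cdot,j)||_1$ over all $M$ arms (the step that, via $\sum_j\sigma=1$, produced the bound $2(1-\eta)\beta$ in Theorem~\ref{Thm:contraction}); instead I can use the sharper pointwise estimate from Eq.~\eqref{Eq:Norm1Deri}, namely $||\nabla\sigma(x,j)||_1 = 2(1-\eta)\beta\,\sigma(x,j)(1-\sigma(x,j)) \le \tfrac{(1-\eta)\beta}{2}$, where the last step uses $\sigma(1-\sigma)\le \tfrac14$. \textbf{Step 4.} Since $||s_a^i-s_b^i||_\infty\le||\bm{s}_a-\bm{s}_b||_\infty$, each summand is at most $\tfrac{(1-\eta)\beta}{2}||\bm{s}_a-\bm{s}_b||_\infty$; the $1/N$ prefactor cancels the $N$ summands, giving $|r(f(\bm{s}_a,j),j)-r(f(\bm{s}_b,j),j)| \le \tfrac{\theta(1-\eta)\beta}{2}||\bm{s}_a-\bm{s}_b||_\infty$, which is the asserted $||\cdot||_\infty$-contraction under $\tfrac{\theta(1-\eta)\beta}{2}<1$.

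The one point needing genuine care is Step~1, specifically the interchange $\mathbb{E}[r(f(\bm{s}_t,j),j)] = r(\mathbb{E}[f(\bm{s}_t,j)],j)$: this is legitimate \emph{because} $r$ is affine in $f(\bm{s}_t,j)$ and $\mathbb{E}[f(\bm{s}_t,j)] = \tfrac{1}{N}\sum_i\sigma(s_t^i,j)$ by Eq.~\eqref{Eq:Population}, and one should also confirm that this expected reward is precisely the quantity driving Eq.~\eqref{Eq:ODEsingle}, as noted in the remark following it. Everything downstream is a direct, constant-tracking specialisation of estimates already established in Theorem~\ref{Thm:contraction}, so I anticipate no further technical difficulty.
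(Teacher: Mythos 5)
Your proof is correct and follows essentially the same route as the paper's: both reduce the problem to bounding $\tfrac{\theta}{N}\sum_{i}$ times the gradient of the Hedge policy and use $\sigma(x^i,j)(1-\sigma(x^i,j))\le\tfrac14$ to obtain the constant $\tfrac{\theta(1-\eta)\beta}{2}$, the only difference being that you apply the scalar Lipschitz bound before the per-agent mean value theorem while the paper applies the mean value theorem to the composite $r\circ f$ over the whole state profile and then the chain rule. Your explicit justification of the interchange $\mathbb{E}[r(f,j)]=r(\mathbb{E}[f],j)$ via affinity is a welcome clarification of what the paper handles only as ``a slight abuse of notations.''
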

\begin{proof}
From Eqs.~\eqref{Eq:hedge}-\eqref{Eq:Population}, the expected population profile,  with a slight abuse of notations, is $f(\bm{s}_t,j)=\mathbb{E} [\sum_{i=1}^N \frac{\mathbbm{1}_{\{a_t^i = j\}}}{N}]=  \frac{\sum_{i=1}^N \sigma(s_t^i,j)}{N}$.
Since  $r(f(\bm{s}_t,j), j)$ is $\theta$-Lipschitz continuous in $f(\bm{s}_t,j)$, we obtain $|\frac{d r(f(\bm{s}_t,j), j)}{d f(\bm{s}_t,j)}| \le \theta$. For two state profiles $\bm{s}_a, \bm{s}_b$, there is a $\bm{x} \in [\bm{s}_a, \bm{s}_b]$ satisfying $r(f(\bm{s}_a,j), j) - r(f(\bm{s}_b,j), j) =  \nabla r(f(\bm{x},j), j) \cdot (\bm{s}_a-\bm{s}_b)$ based on  mean value theorem. Hence, $|r(f(\bm{s}_a,j),j) - r(f(\bm{s}_b,j), j)| \le ||\nabla r(f(\bm{x},j), j)||_1||\bm{s}_a-\bm{s}_b||_{\infty}$.

Considering   $r(f(\bm{x},j), j)$ is linear in  $f(\bm{x},j)$, we have $\frac{d r(f(\bm{x},j), j)}{d x^i(l)} = \frac{d r(f(\bm{x},j), j)}{d f(\bm{x},j)} \frac{d f(\bm{x},j)}{d x^i(l)}, \forall i \in \mathcal{N}, l \in \mathcal{M}$. As  $|\frac{d r(f(\bm{x},j), j)}{d f(\bm{x},j)}| \le \theta$, we only need to handle $\frac{d f(\bm{x},j)}{d x^i(l)}$.  Moreover, $\frac{d f(\bm{x},j)}{d x^i(l)} = \frac{1}{N} \frac{d\sigma(x^i,j)}{dx^i(l)} = \frac{(1-\eta)\beta \sigma(x^i,j)(\mathbbm{1}_{\{j=l\}} - \sigma(x^i,l))}{N}$.

Consequently, we acquire the following result:
\begin{equation} \label{Eq:1normr}
\begin{aligned}
& ||\nabla r(f(\bm{x},j), j)||_1  = \sum \nolimits_{i=1}^{N} \sum \nolimits_{l=1}^M \left| \frac{d r(f(\bm{x},j), j)}{d x^i(l)} \right| \\
& \le \frac{\theta(1-\eta)\beta}{N} \sum \nolimits_{i=1}^{N} \sum \nolimits_{l=1}^M |\sigma(x^i,j)(\mathbbm{1}_{\{j=l\}} - \sigma(x^i,l))|\\
& =  \frac{\theta(1-\eta)\beta}{N} \sum \nolimits_{i=1}^{N} 2\sigma(x^i,j)(1-\sigma(x^i,j))\\
& \le \frac{\theta(1-\eta)\beta}{N} \sum \nolimits_{i=1}^{N} 2 \times \frac{1}{4} \\
&=  \frac{\theta(1-\eta)\beta}{2},
\end{aligned}
\end{equation}
where  the second inequality  is because $y(1-y) \le \frac{1}{4}, \forall y \in [0,1]$. Finally, it results in:
\begin{equation}
\scalebox{0.96}{$|r(f(\bm{s}_a,j), j) - r(f(\bm{s}_b,j), j)| \le  \frac{\theta(1-\eta)\beta}{2} ||\bm{s}_a-\bm{s}_b||_{\infty}.$}
\end{equation}
\end{proof}

Comparing Theorem~\ref{Thm:contraction} and Corollary~\ref{Cor:LinearContr}, the factor 4 is now reduced to $\frac{1}{2}$. Therefore, parameters $\theta, \beta, \eta$ can take much broader range of values to ensure the contraction mapping.

\section{State Change and Model Extension} \label{Sec:StateHeter}
In this section, we derive the cumulative state change to infer the regret information and extend the mean field model to demonstrate its effectiveness in various scenarios.

\subsection{Cumulative State Change}
After state updating according to Eq.~\eqref{Eq:StateUpdate}, the state change is $\Delta s^i_n = \gamma_{n} (w^i_n - s^i_n)$. Let $\sigma(s^i_n) \cdot \Delta s^i_n$ be the inner product between the simplex  $\sigma(s^i_n)$ in Eq.~\eqref{Eq:hedge} and $\Delta s^i_n$. Also denote $(\Delta s^i_n)^2$ as the vector with square on each element of $\Delta s^i_n$.  The following theorem provides the cumulative state change.

\begin{theorem} \label{Thm:cumulative}
Denote $\bm{1}$ as a $M$-length vector with each element equal to 1.  For agent $i$ and an arbitrary arm $j$, we have:
\vspace{-12pt}
\begin{equation} \label{Eq:CumuChange}\scalebox{0.96}{$
\begin{aligned}
&\beta s^i_0(j) + \sum \nolimits_{n=0}^{K} \beta \Delta s^i_n(j) - \ln \Bigl(\sum \nolimits_{j=1}^{M} \mathrm{Exp}\left(\beta s^i_0(j)\right)\Bigr) \le \\
&\sum \nolimits_{n=0}^{K}\Bigl[ \frac{\beta (\sigma(s^i_n)-\frac{\eta}{M}\bm{1}) \cdot \Delta s^i_n }{1-\eta}+  \frac{(e-2)\beta^2 \sigma(s^i_n) \cdot (\Delta s^i_n)^2}{1-\eta}\Bigr].
\end{aligned}$}
\vspace{-2pt}
\end{equation}
\end{theorem}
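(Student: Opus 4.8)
The plan is to run a standard potential-function (weighted-majority / Hedge) argument on agent $i$'s logit weights. Define the potential $\Phi_n = \ln\bigl(\sum_{j=1}^{M}\mathrm{Exp}(\beta s^i_n(j))\bigr)$, and write $p_n(j) = \mathrm{Exp}(\beta s^i_n(j))/\sum_{k=1}^{M}\mathrm{Exp}(\beta s^i_n(k))$ for the normalized logit distribution. The key bridge is that, by Eq.~\eqref{Eq:hedge}, $\sigma(s^i_n,j) = (1-\eta)p_n(j) + \eta/M$, i.e. $\sigma(s^i_n) - \tfrac{\eta}{M}\bm{1} = (1-\eta)p_n$; this is what will convert weighted sums against $p_n$ into the quantities $(\sigma(s^i_n)-\tfrac{\eta}{M}\bm 1)\cdot(\cdot)$ appearing in Eq.~\eqref{Eq:CumuChange}.

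Next I would bound one step of the potential. By Eqs.~\eqref{Eq:StateUpdate}--\eqref{Eq:realizedRrd}, $s^i_{n+1}(j) = s^i_n(j) + \Delta s^i_n(j)$ with $\Delta s^i_n(j) = \gamma_n(w^i_n(j)-s^i_n(j))$, and $\Delta s^i_n(j)=0$ for every unplayed arm, so $\mathrm{Exp}(\beta s^i_{n+1}(j)) = \mathrm{Exp}(\beta s^i_n(j))\,\mathrm{Exp}(\beta\Delta s^i_n(j))$ gives $\Phi_{n+1}-\Phi_n = \ln\sum_{j=1}^{M} p_n(j)\,\mathrm{Exp}(\beta\Delta s^i_n(j))$. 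Because $w^i_n(j),s^i_n(j)\in[0,1]$ the exponent satisfies $|\beta\Delta s^i_n(j)|\le\beta\gamma_n\le 1$ (using the stepsize normalization from Eq.~\eqref{Eq:timestep}), so the elementary inequality $e^x\le 1+x+(e-2)x^2$ for $x\le 1$ applies; combined with $\ln(1+y)\le y$ this yields $\Phi_{n+1}-\Phi_n \le \beta\,p_n\cdot\Delta s^i_n + (e-2)\beta^2\,p_n\cdot(\Delta s^i_n)^2$. Substituting $p_n = (\sigma(s^i_n)-\tfrac{\eta}{M}\bm 1)/(1-\eta)$ and then upgrading the second-order term using $(\Delta s^i_n)^2\ge 0$ componentwise together with $\sigma(s^i_n)-\tfrac{\eta}{M}\bm 1\le\sigma(s^i_n)$ gives the per-step bound $\Phi_{n+1}-\Phi_n \le \frac{\beta(\sigma(s^i_n)-\frac{\eta}{M}\bm 1)\cdot\Delta s^i_n}{1-\eta} + \frac{(e-2)\beta^2\sigma(s^i_n)\cdot(\Delta s^i_n)^2}{1-\eta}$.

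Finally I would telescope over $n=0,1,\dots,K$: the sum of the right-hand sides is exactly the right-hand side of Eq.~\eqref{Eq:CumuChange}, while the left-hand side collapses to $\Phi_{K+1}-\Phi_0$. To finish, lower-bound the terminal potential by a single term at the fixed arm $j$: $\Phi_{K+1}\ge\ln\mathrm{Exp}(\beta s^i_{K+1}(j)) = \beta s^i_{K+1}(j) = \beta s^i_0(j) + \sum_{n=0}^{K}\beta\Delta s^i_n(j)$, and note $\Phi_0 = \ln\bigl(\sum_{j=1}^{M}\mathrm{Exp}(\beta s^i_0(j))\bigr)$; rearranging $\Phi_{K+1}-\Phi_0\le(\text{telescoped sum})$ gives precisely the claimed inequality.

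The only delicate point I anticipate is justifying the exponential inequality, i.e. ensuring $\beta\Delta s^i_n(j)\le 1$ over the whole horizon; this rests on the boundedness $w^i_n(j),s^i_n(j)\in[0,1]$ (so $|w^i_n(j)-s^i_n(j)|\le 1$) and on the stepsize being normalized so $\beta\gamma_n\le 1$. Everything else is bookkeeping — in particular the passage between $\sigma(s^i_n)$ and the logit weights $p_n$, and the sign argument that replaces $\sigma(s^i_n)-\tfrac{\eta}{M}\bm 1$ by $\sigma(s^i_n)$ in the quadratic term.
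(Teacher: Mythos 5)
Your proposal is correct and follows essentially the same route as the paper's proof: a potential/weight argument on $\ln\bigl(\sum_j \mathrm{Exp}(\beta s^i_n(j))\bigr)$, the identity $\sigma(s^i_n)-\tfrac{\eta}{M}\bm 1=(1-\eta)p_n$, the inequality $e^x\le 1+x+(e-2)x^2$ followed by $\ln(1+y)\le y$, telescoping, and lower-bounding the terminal potential by the single term at arm $j$. Your explicit flagging of the validity condition $\beta\gamma_n|w^i_n(j)-s^i_n(j)|\le 1$ for the exponential inequality is in fact slightly more careful than the paper, which applies that inequality without comment.
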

See Appendix~\ref{App:cumulative} for the proof. Cumulative state change entails the regret of arm playing, i.e., $\max_{j}\sum_{n=0}^{T-1} (r(f_n, j) -\mathbb{E}[r(f_n, a^i_n)])$. Specifically, if regarding the left-hand side of  Eq.~\eqref{Eq:CumuChange} as $\max_{j} r(f_n, j)$, and expanding the right-hand side via $r(f_n, a^i_n)$,  one can obtain a theoretically loose bound of a scaled regret, which serves as a \emph{tradeoff between regret minimization and system stability}. We will show in the evaluations that the stationary policy actually has a small empirical regret.  

\emph{Remark}:  The bandit game will not converge to an equilibrium  when applying the traditional MAB algorithms which may have tight-bounded regret, like UCB~\cite{Ref:UCB} and EXP3~\cite{Ref:EXP3}. This is because they need to model an agent state as the cumulative reward. Hence, the state will consistently ``increase'' since realized rewards are positive, so that the system is unstable. This is also the underlying reason why works~\cite{Ref:mean2013gummadi,Ref:distributed2017maghsudi,Ref:intelligent2017zhao} have to assume a state regeneration to achieve system  stability.
\vspace{-3pt}
\subsection{Heterogeneous Learning Parameter} \label{Sec:heter}
So far, we have analyzed the bandit game when the stationary policy adopts homogeneous learning parameters, namely uniform $\beta, \eta$ in Eq.~\eqref{Eq:hedge}. Next, we explore the heterogeneous situation where $\beta, \eta$ could vary for different agents. 

For agent $i$, parameter $\beta^i$ will keep unchanged, while $\eta_n^i$ is diminishing and satisfies $\lim_{n \rightarrow \infty} \eta_n^i = 0$ to give less weight to the random choice  when the reward information is accurately learned. Therefore, the stationary policy changes to:
\vspace{-5pt}
\begin{equation} \label{Eq:hedge2}
\sigma(s^i_n, j) =  (1-\eta^i_n) \frac{\mathrm{Exp}\left(\beta^i s^i_n(j)\right)}{\sum_{k=1}^M \mathrm{Exp} \left(\beta^i s^i_n(k)\right)} + \frac{\eta^i_n}{M}.
\end{equation}
Except the playing policy, the updating rule is the same, as described in Eq.~\eqref{Eq:StateUpdate}. Following a similar analysis in the  homogeneous situation, it can be asserted that the results in Theorems~\ref{Thm:fixpoint}-\ref{Thm:DisConver} and Lemma~\ref{Lem:pseudo} still hold. As for the contraction mapping conditions in Theorem~\ref{Thm:contraction} and Corollary~\ref{Cor:LinearContr}, they turn out to be $4\theta \beta_{\max} <1$ and $\frac{\theta \beta_{\max}}{2} <1$, respectively, where $\beta_{\max} = \max \{\beta^i| i\in \mathcal{N}\}$. The cumulative state change in Theorem~\ref{Thm:cumulative} is specified by replacing $\beta,\eta$ with $\beta^i,\eta_n^i$.

\subsection{Overlapping Arms}
Another extension is about overlapping arms.  Mathematically, there are $M$ arms, and each agent $i$ pulls an arm from a subset $\mathcal{M}^i, \mathcal{M}^i \subseteq \mathcal{M}$ with $M^i = |\mathcal{M}^i|$. Overlapping means there exist $i,k$ such that $\mathcal{M}^i \cap \mathcal{M}^k \ne \emptyset$. The conclusion is: previous results still hold after we make several  adjustments. The proofs are the same, thus we skip them to save space.

One adjustment is to  choose arms from  $\mathcal{M}^i$ for agent $i$:
\begin{equation} \label{Eq:laphedge}
\vspace{-1pt}
\sigma(s^i_n,j) = (1-\eta) \frac{\mathrm{Exp}\left(\beta s^i_n(j)\right)}{\sum_{k \in \mathcal{M}^i} \mathrm{Exp} \left(\beta s^i_n(k)\right)} + \frac{\eta}{M^i}.
\vspace{-1pt}
\end{equation}
The state $s_n^i$ is now a $M^i$-length vector, and the state profile $\bm{s}_n$ is in $[0,1]^{\sum_{i \in \mathcal{N}}M^i}$. Other parameters, especially the cardinality  of sets, are adapted accordingly. Similar to Section~\ref{Sec:ExistEquili}, we could obtain  the existence and uniqueness of MFE.

\section{Performance Evaluation}\label{Sec:Eval}
In this section, we carry out the evaluations where results are smoothed via \emph{LOWESS} in Python for better exhibition. 

\subsection{Existence and Uniqueness of MFE}
\emph{Reward function.} We consider a competition game where agents compete for different types of resource~\cite{yang2018mean,hanif2015mean}, also regarded as arms. The reward  $r(f_n, j)$ is a non-linear decreasing function in the $j$-th element of population profile $f_n(j)$~\cite{Ref:mean2013gummadi}:
\begin{equation} \label{Eq:general} 
r(f_n, j) = \frac{1}{1+ \theta(j) f_n(j)},
\end{equation}
where $\theta(j) \in [0.8\theta,\theta], \forall j \in \mathcal{M}$. One can verify that $r(f_n, j)$ is in $[0,1]$ and satisfies $\theta$-Lipschitz continuity. Besides, the stepsize  in Eq.~\eqref{Eq:StateUpdate} is $\gamma_n = 1/(n+1)$.

\begin{figure}[t]
\centering
\subfloat[$N=50$]{\includegraphics[width=0.325\linewidth,height=0.67in]{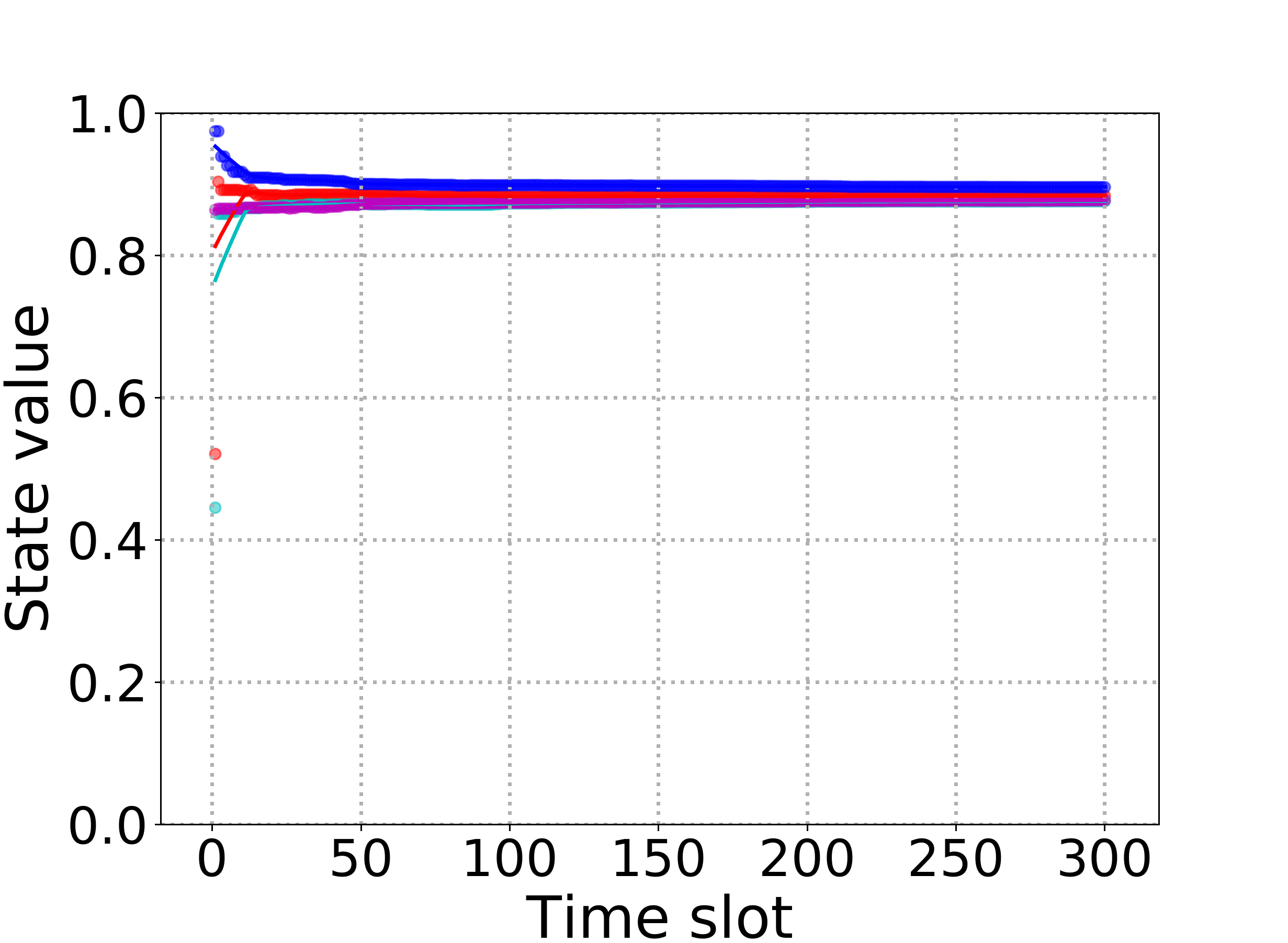}} \ 
\subfloat[$N=100$]{\includegraphics[width=0.325\linewidth,height=0.67in]{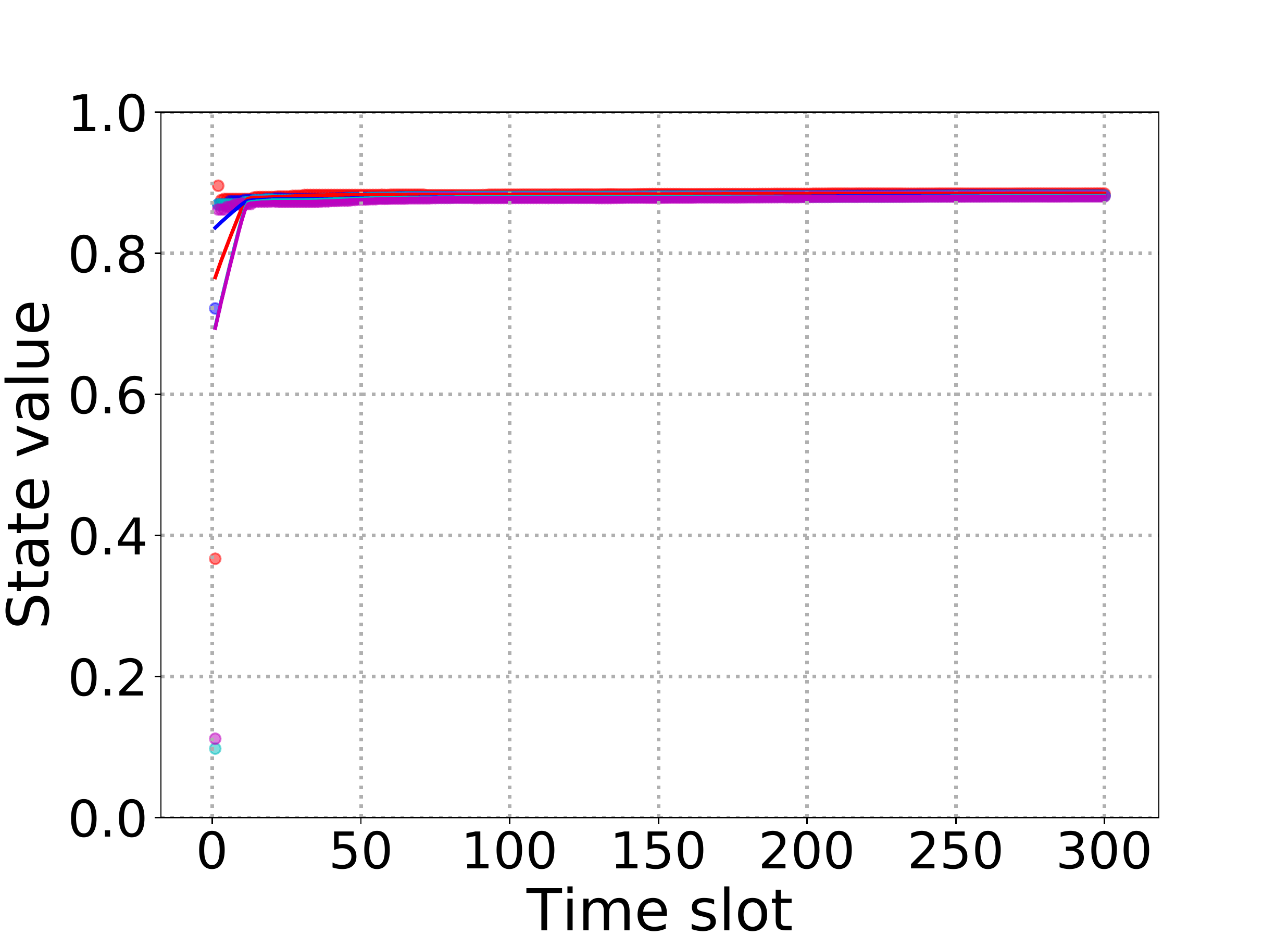}} \
\subfloat[$N=200$]{\includegraphics[width=0.325\linewidth,height=0.67in]{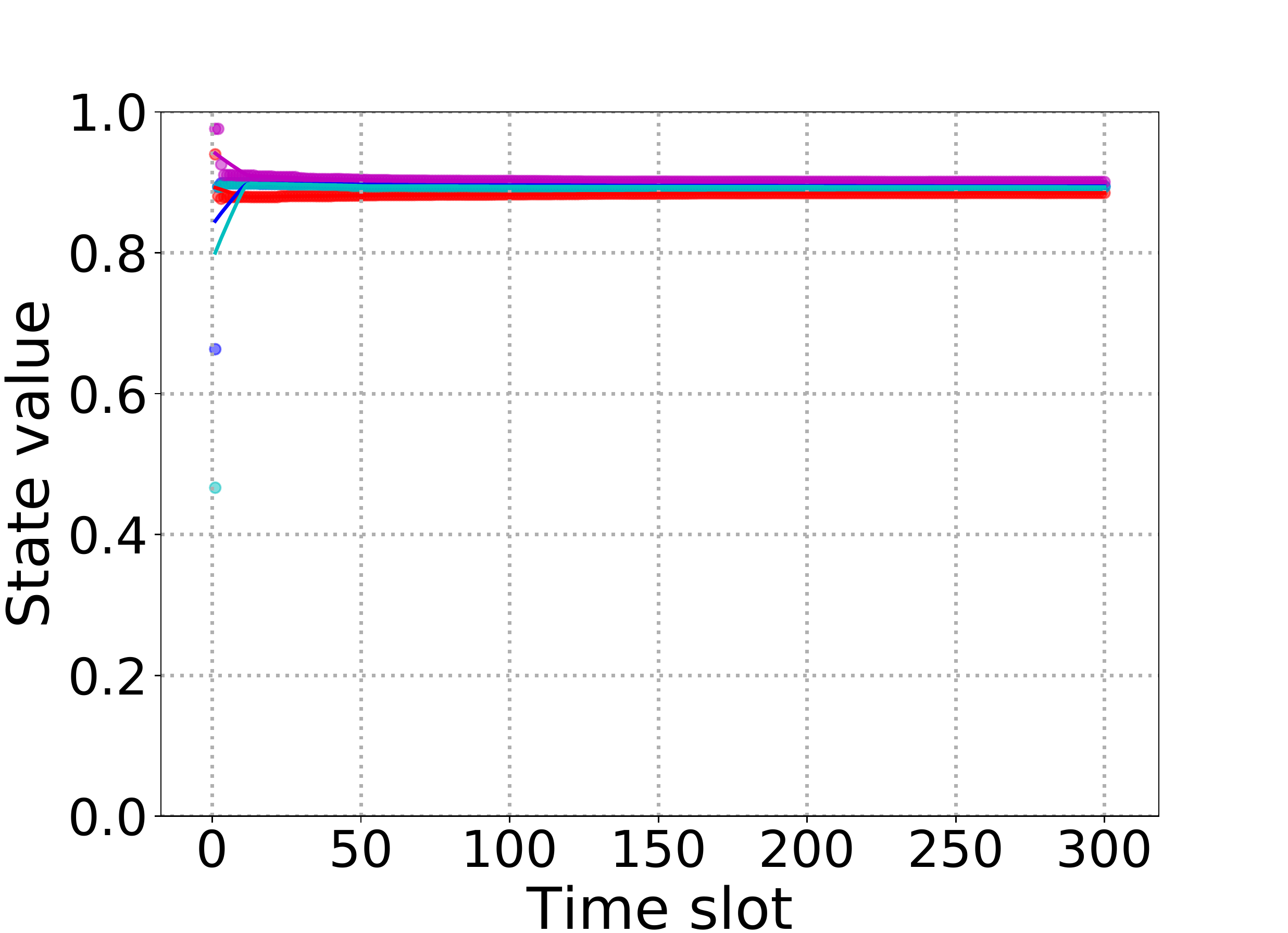}} \\
\vspace{-6pt}
\caption{Contraction mapping: state evolution.}
\label{Fig:GConState}
\vspace{-16pt}
\end{figure}

\begin{figure}[t]
\centering
\subfloat[$N=50$]{\includegraphics[width=0.325\linewidth,height=0.67in]{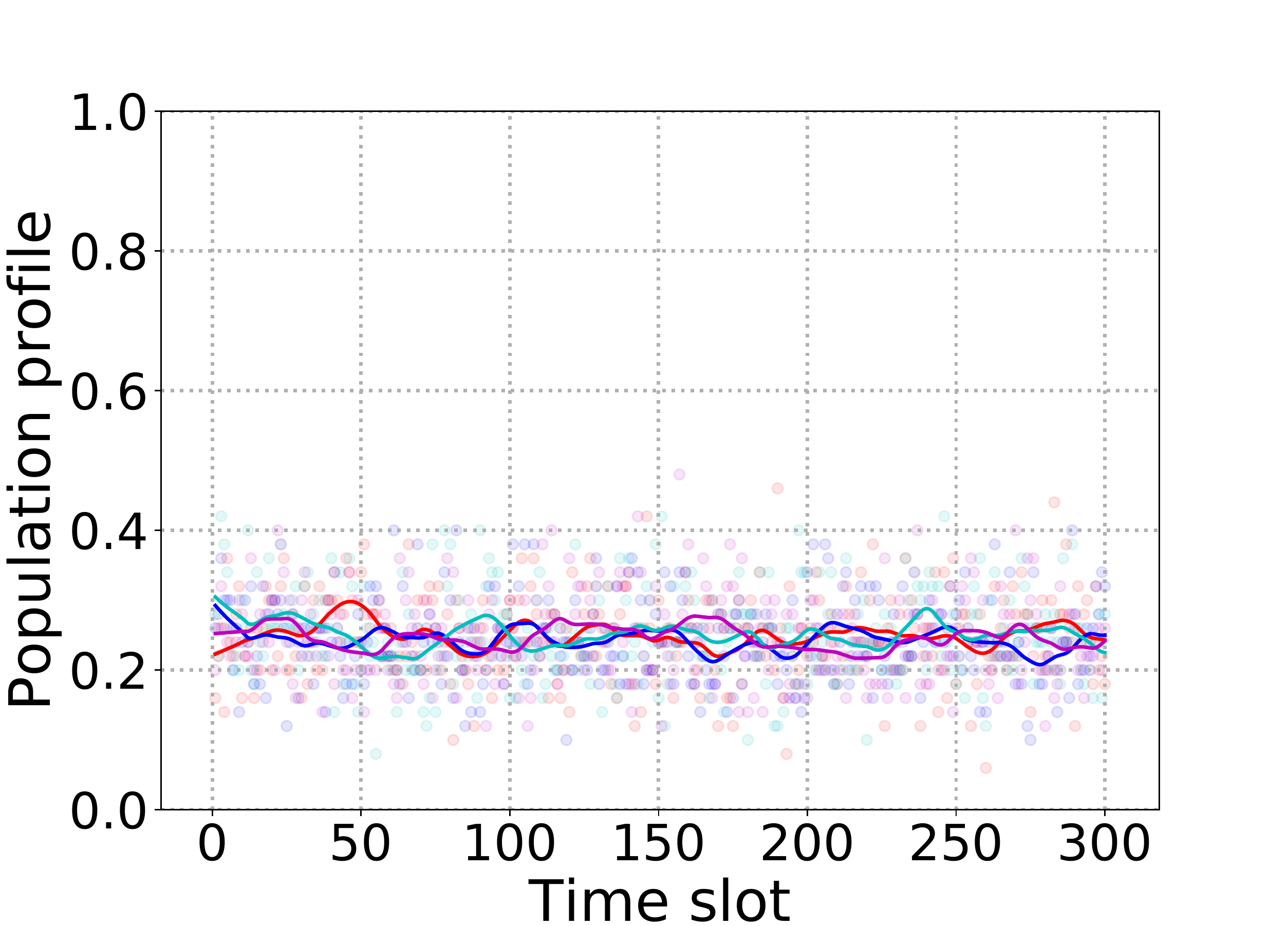}} \ 
\subfloat[$N=100$]{\includegraphics[width=0.325\linewidth,height=0.67in]{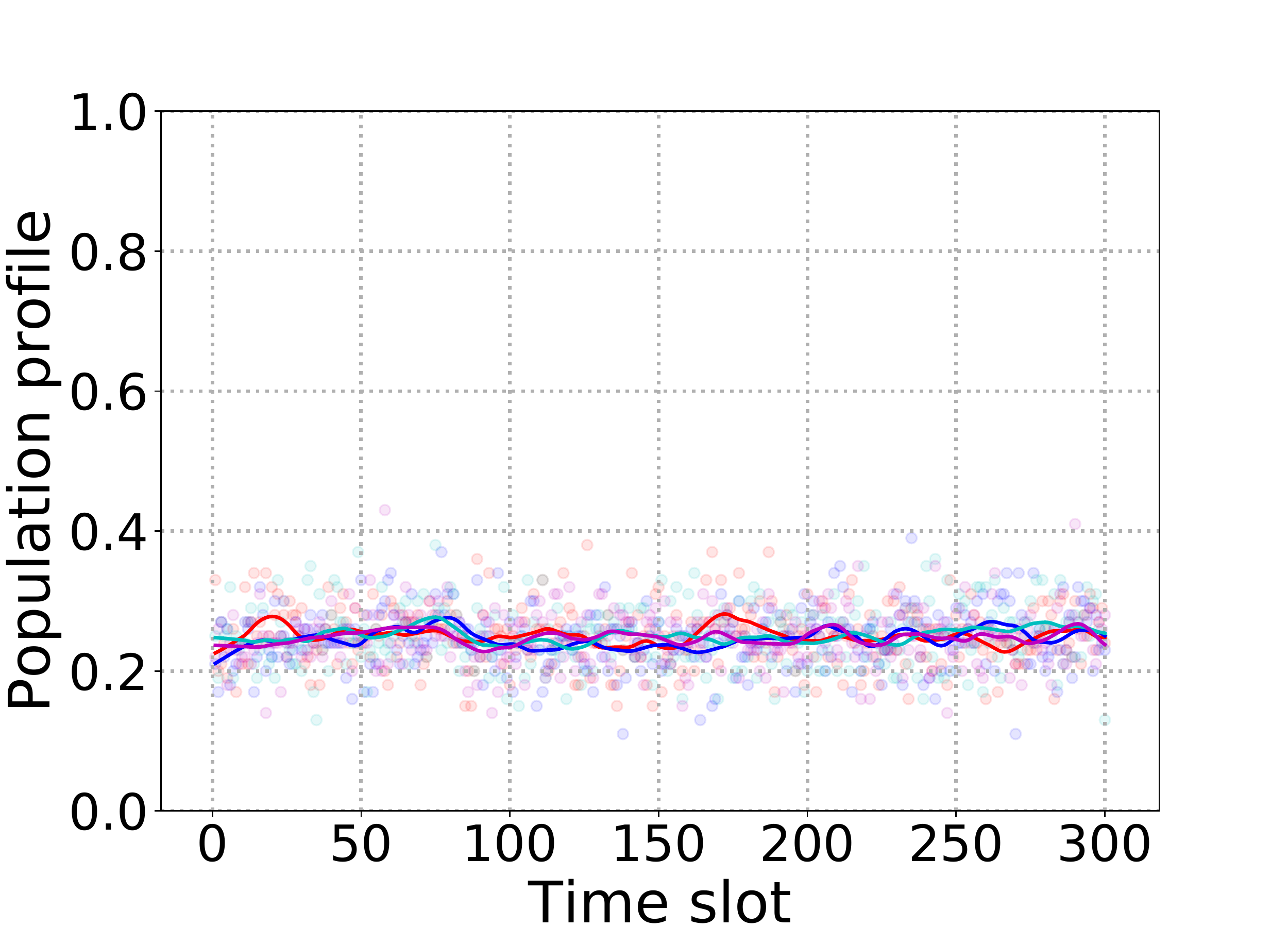}} \
\subfloat[$N=200$]{\includegraphics[width=0.325\linewidth,height=0.67in]{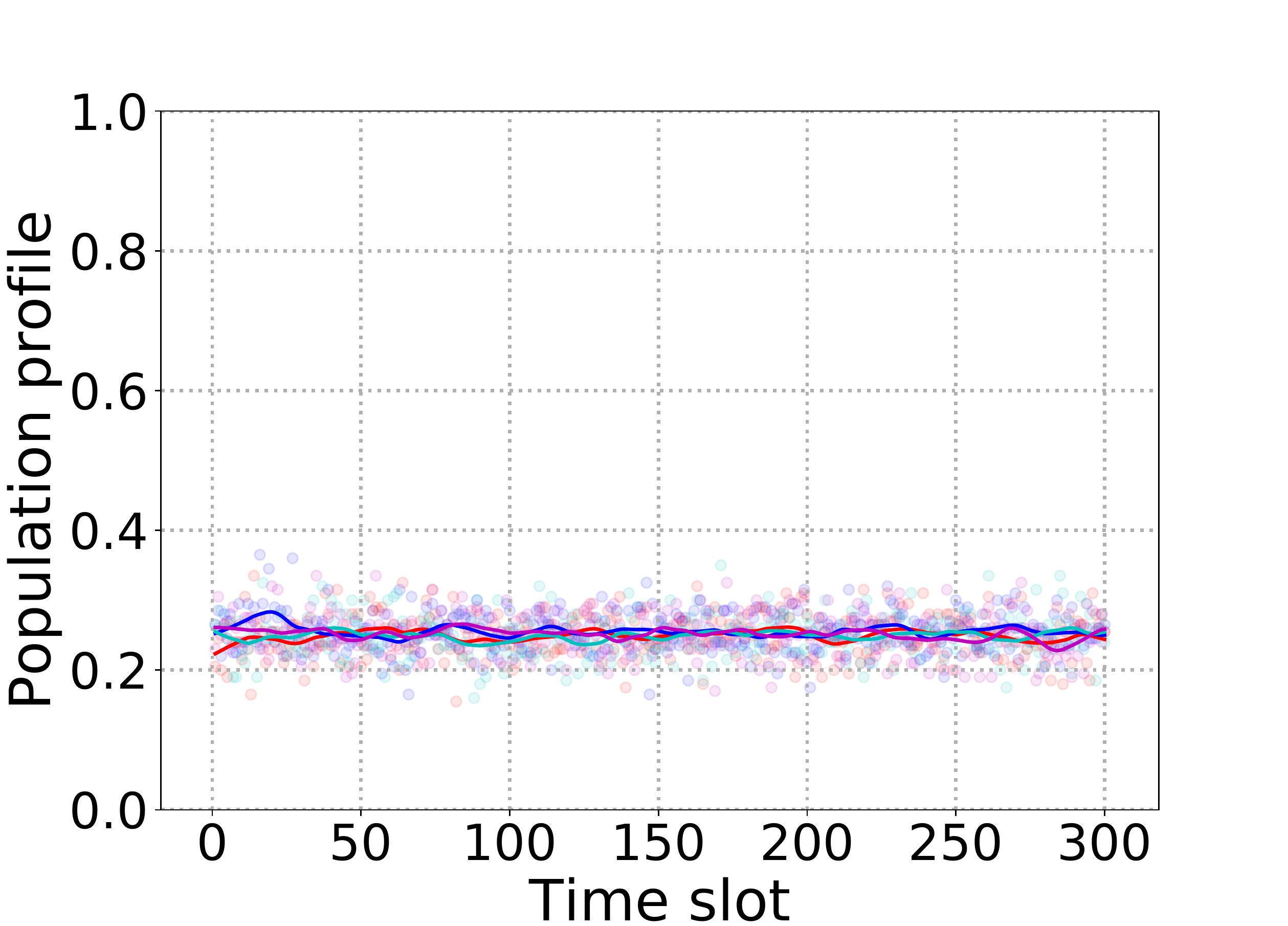}} \\
\vspace{-6pt}
\caption{Contraction mapping: population profile evolution.}
\label{Fig:GConPop}
\vspace{-15pt}
\end{figure}

\begin{figure}[t]
\centering
\subfloat[$N=50$]{\includegraphics[width=0.325\linewidth,height=0.67in]{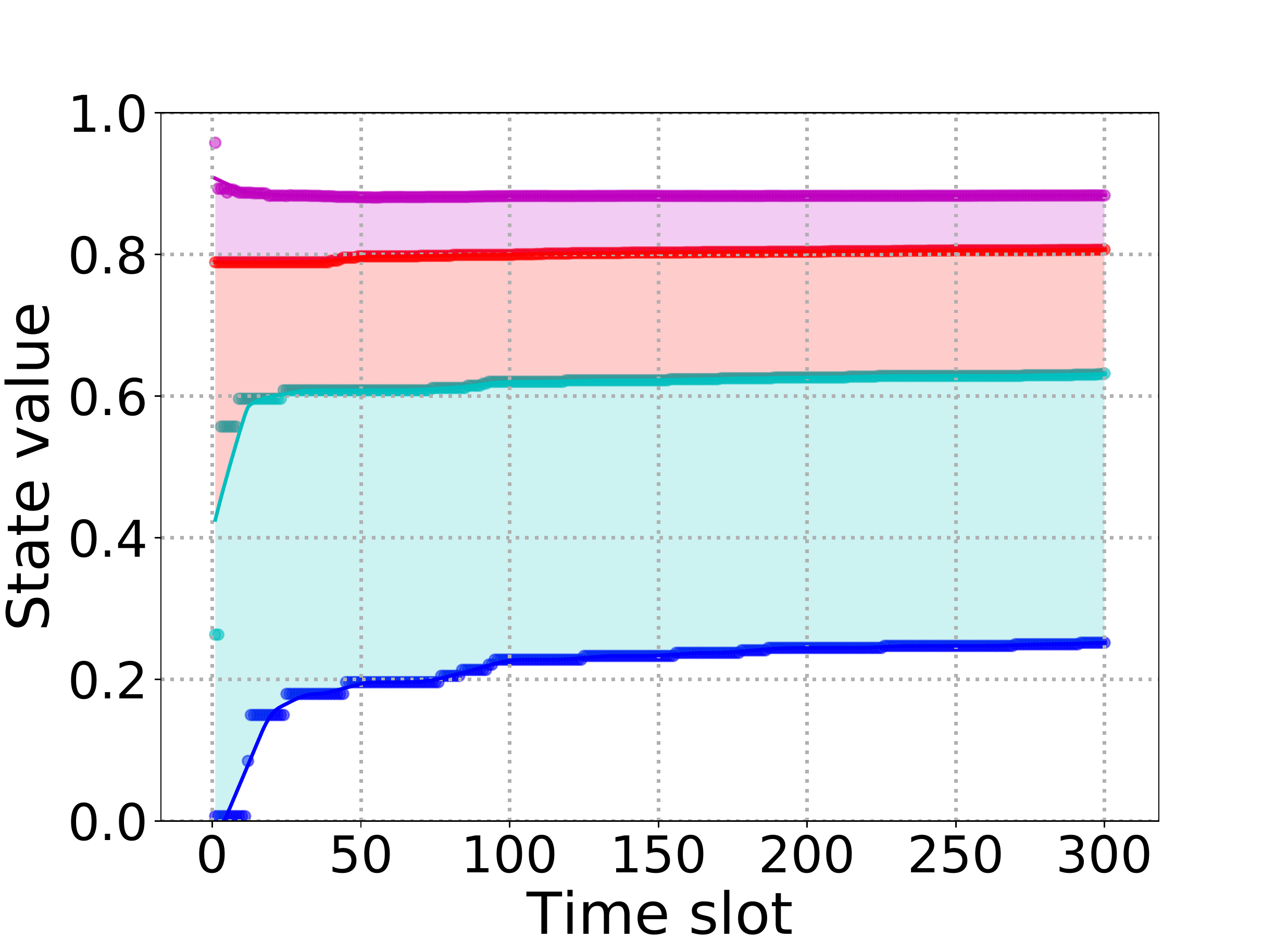}} \ 
\subfloat[$N=100$]{\includegraphics[width=0.325\linewidth,height=0.67in]{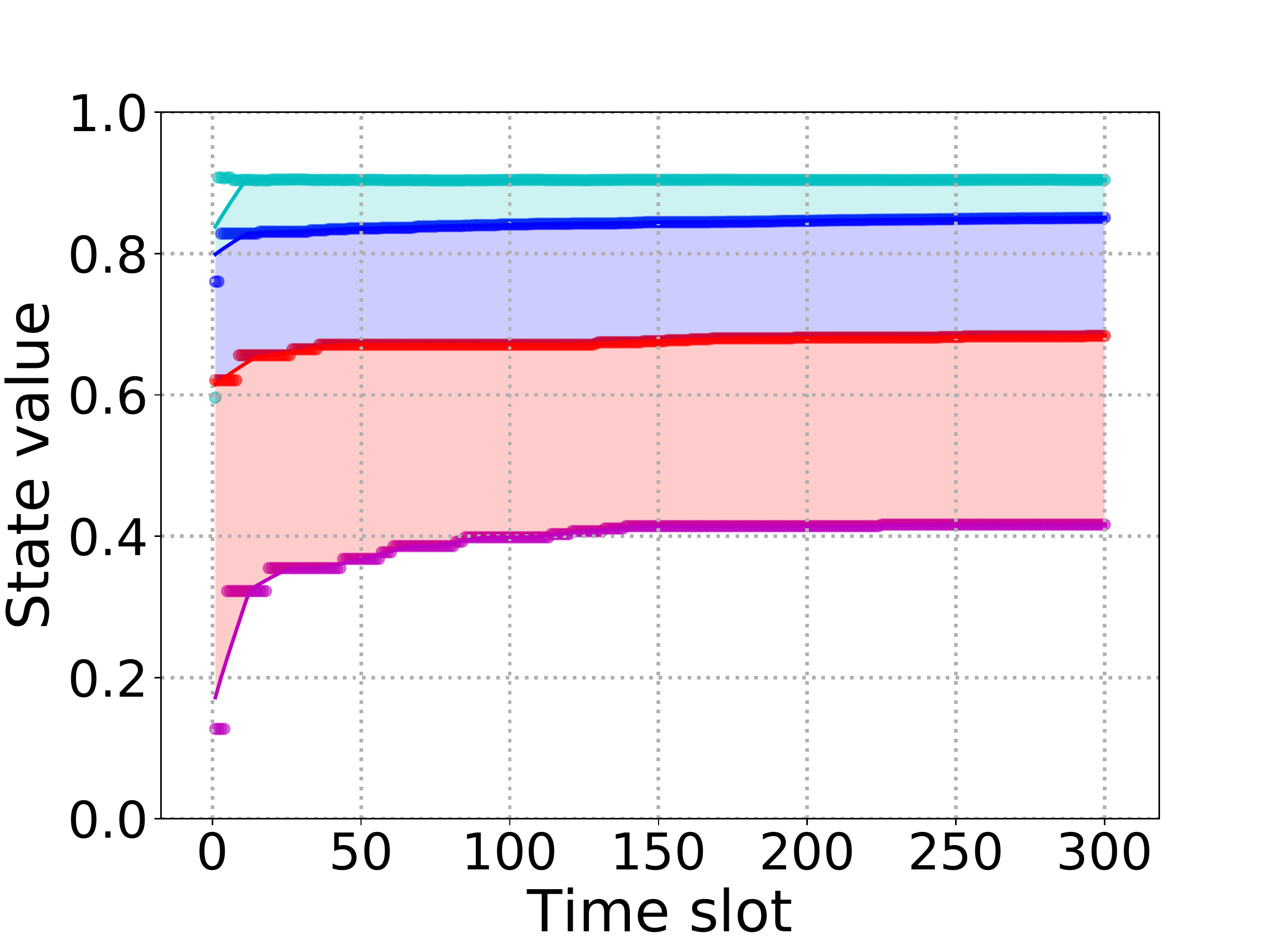}} \
\subfloat[$N=200$]{\includegraphics[width=0.325\linewidth,height=0.67in]{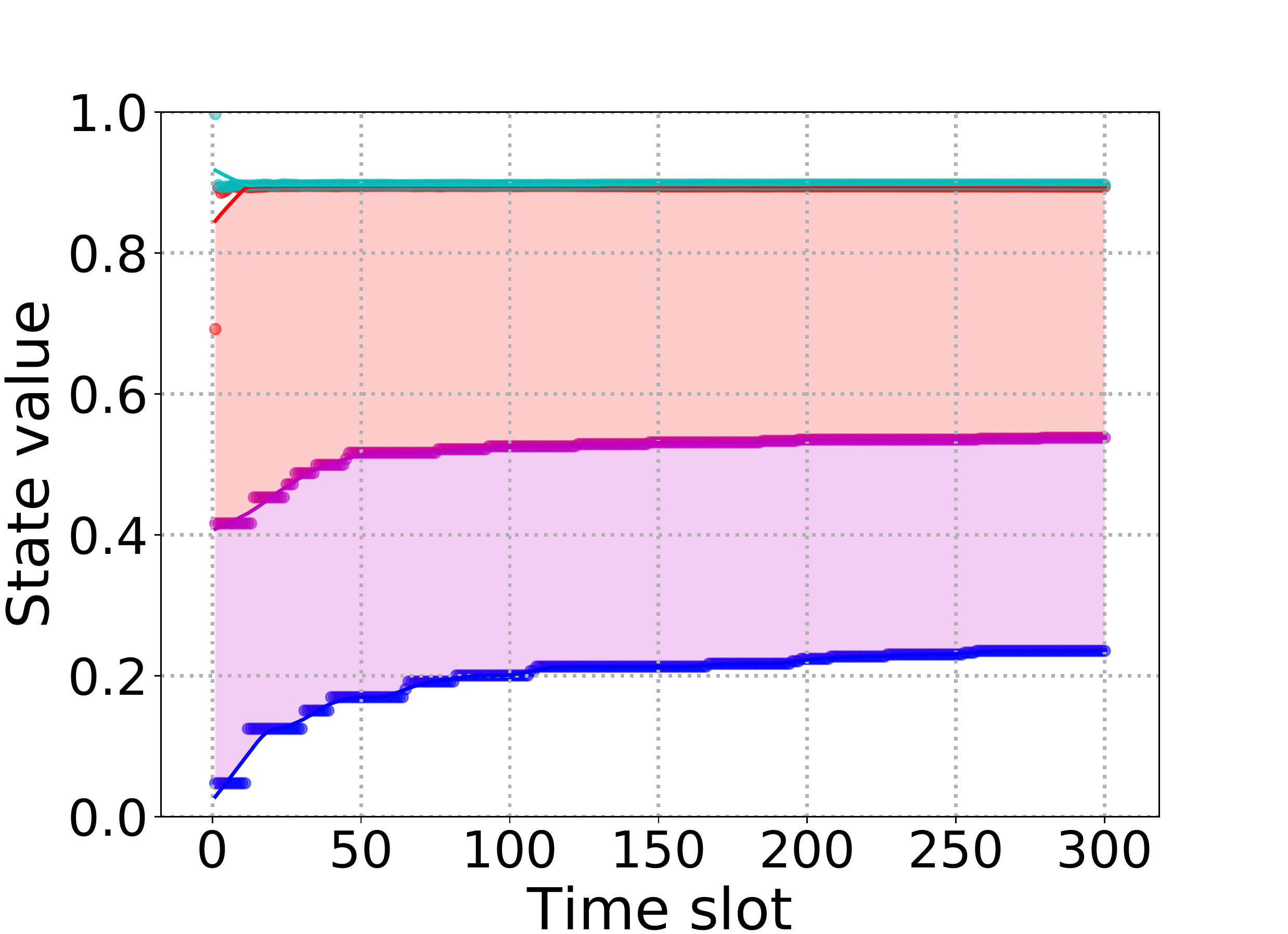}} \\
\vspace{-6pt}
\caption{Non-contraction mapping: state evolution.}
\label{Fig:GNConState}
\vspace{-16pt}
\end{figure}

\begin{figure}[t]
\centering
\subfloat[$N=50$]{\includegraphics[width=0.325\linewidth,height=0.67in]{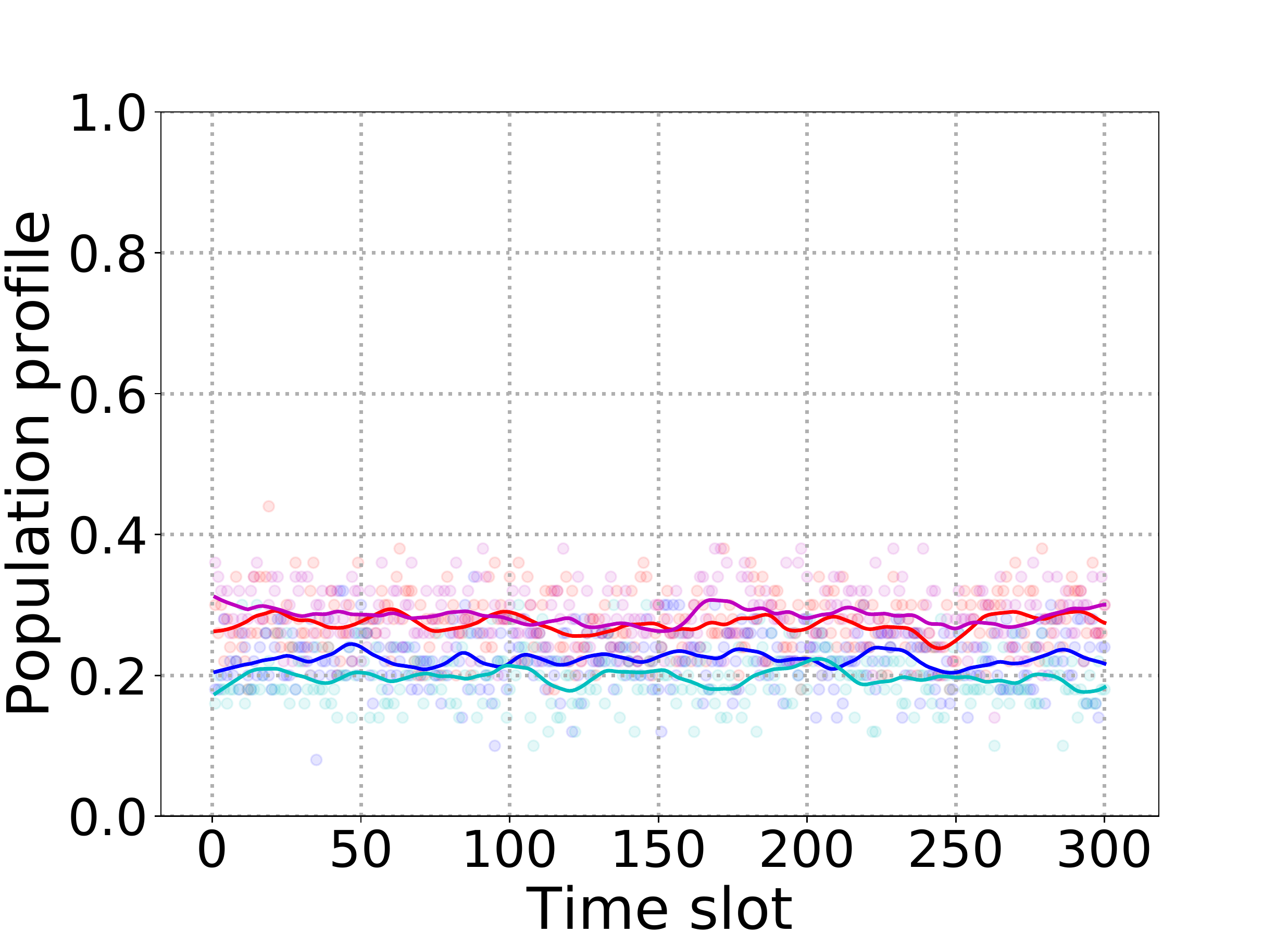}} \ 
\subfloat[$N=100$]{\includegraphics[width=0.325\linewidth,height=0.67in]{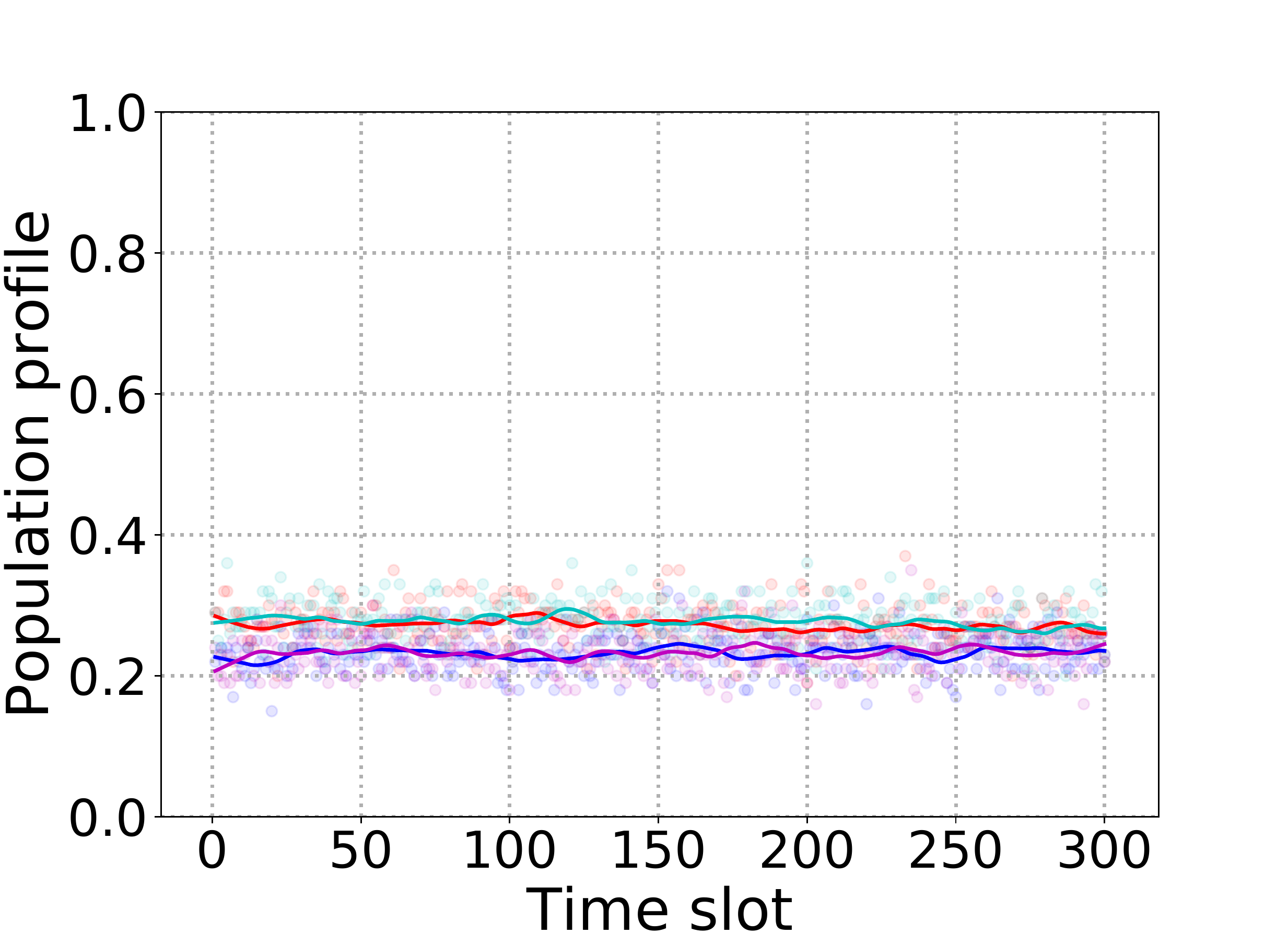}} \
\subfloat[$N=200$]{\includegraphics[width=0.325\linewidth,height=0.67in]{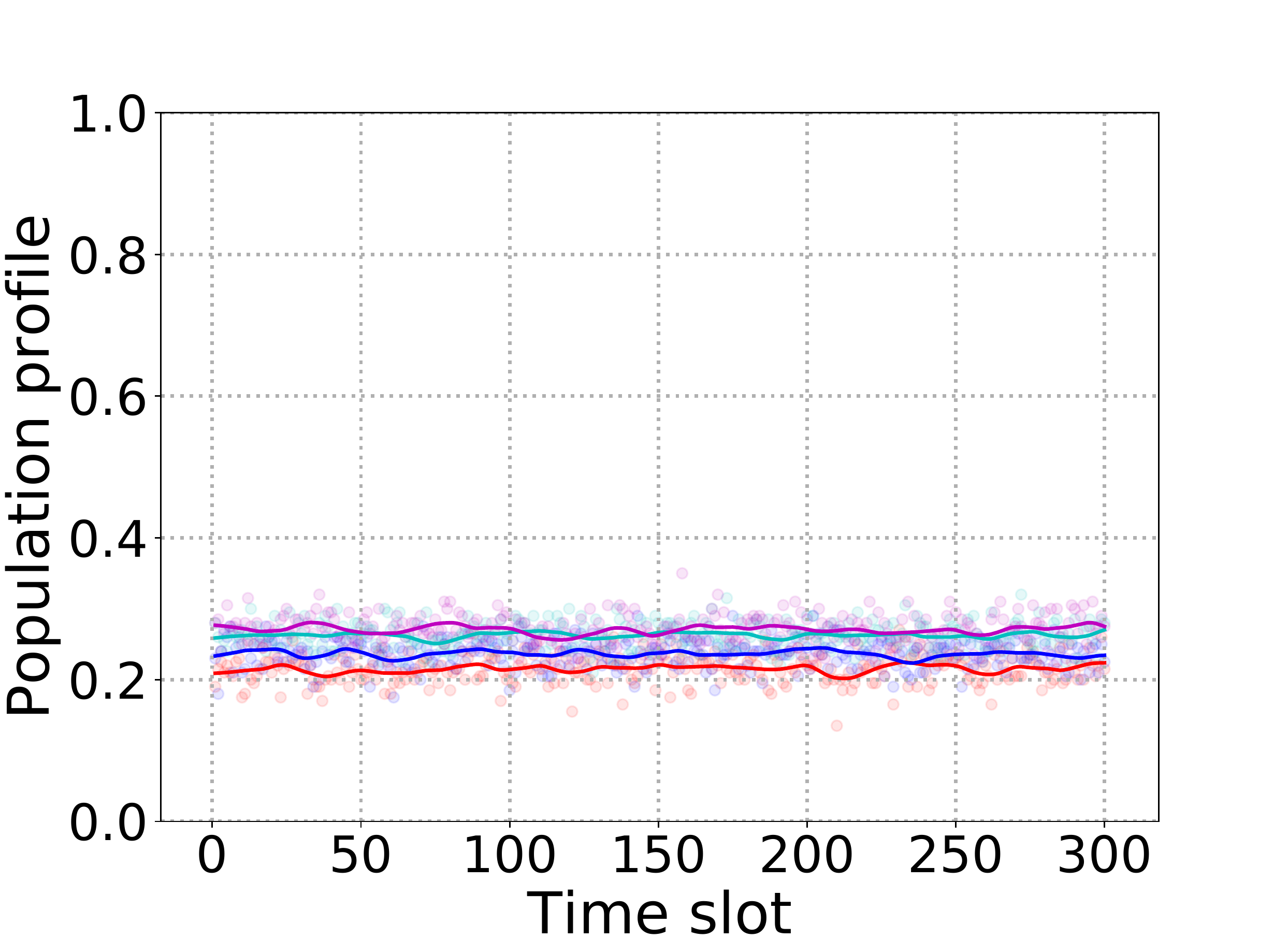}} \\
\vspace{-6pt}
\caption{Non-contraction mapping: population profile evolution.}
\label{Fig:GNConPop}
\vspace{-15pt}
\end{figure}

\emph{Contraction mapping}. Let $(\theta,\beta,\eta)$ be $(0.5, 0.5, 0.2)$, respectively, and hence the contraction condition $4 \theta (1-\eta) \beta < 1$ holds. Given the number of arms $M=4$, we run the bandit game for four times and display the state evolution of arm $2$  in Figure~\ref{Fig:GConState}. We can see that the state will converge to a steady value, which is also unique for different number of agents $N$, i.e., the bandit game has a unique MFE.

Figure~\ref{Fig:GConPop} shows the population profile of arm $2$, which  is unique and tends to be stable as $N$ increases. Because $\mathbb{E}\left[\mathbbm{1}_{\{\overline{a}^i = j\}}\right] = \sigma(\overline{s}^i,j)$ and $\mathrm{var}\left[\mathbbm{1}_{\{\overline{a}^i = j\}}\right] =  \sigma(\overline{s}^i,j)(1- \sigma(\overline{s}^i,j)) \le \frac{1}{4}$ due to $y(1-y) \le \frac{1}{4}, \forall y \in[0,1]$, we have $\mathrm{var}[f(\overline{\bm{s}}, j)] = \frac{1}{N^2} \sum_{i=1}^{N} \mathrm{var}\left[\mathbbm{1}_{\{\overline{a}^i = j\}}\right] \le \frac{1}{N^2} \frac{N}{4} = \frac{1}{4N}$.  Based on the Chebyshev's inequality, we obtain $\mathrm{Pr}\left(|f(\overline{\bm{s}}, j) - \mathbb{E}[f(\overline{\bm{s}}, j)]| \ge \epsilon \right) \le \frac{1}{4N\epsilon^2}$. Therefore, if $N$ increases, $f(\overline{\bm{s}}, j)$ will be more stable around $\mathbb{E}[f(\overline{\bm{s}}, j)]$. 

\emph{Non-contraction mapping}.  Let $\theta$, $\eta$ and $M$ stay the same, while $\beta$ changes to 30 so the contraction mapping condition $4 \theta (1-\eta) \beta < 1$ is violated. Similarly,  we run the bandit game for  four times and depict the state evolution in Figure~\ref{Fig:GNConState}, which shows that the state converges to multiple distinct MFEs. Moreover, we plot the population profile in Figure~\ref{Fig:GNConPop}. Comparing with Figure~\ref{Fig:GConPop}, the fluctuation around  $\mathbb{E}[f(\overline{\bm{s}}, j)]$ also becomes impaired when $N$ is large. Due to multiple MFEs, the population profile has various steady values as well.
\vspace{-3pt}
\subsection{Empirical Regret} \label{Sec:Eregret}
For the general reward, we compute the regret when contraction mapping condition holds, i.e., $(\theta,\beta,\eta)=(0.5,0.5,0.2)$. Furthermore, we  implement a linear reward: $r(f_n(j), j) = 1 - \theta(j) f_n(j)$ where $\theta(j)  \in [0.8\theta, \theta], \forall j \in \mathcal{M}$ and $(\theta,\beta, \eta)=(1,2,0.2)$, so the reward  is a contraction from  Corollary~\ref{Cor:LinearContr}.  We run the evaluation for six times with each operating for $T=2000$ time slots, and show the average regret and cumulative rewards  in Table~\ref{Tab:regret}. A well-known regret bound  for EXP3 is $O(\sqrt{T})$, and here $\sqrt{T} = 44.721$. We can see that both regrets are much smaller than  $\sqrt{T}$ for each $N$. Besides, the regret of   general reward is  less than that of linear reward, as changes in the population profile have smaller impact when appearing in the denominator of Eq.~\eqref{Eq:general}.  Moreover, the regrets decease as $N$ grows large for both cases, which is mainly due to a more stable population profile. In summary, the stationary policy has a tight empirical regret.
\vspace{-5pt}
\begin{table}[H]
\centering
\begin{tabular}{llccc}
\toprule
Reward & Term & $N=50$ & $N=100$  & $N=200$ \\
\midrule
General & Regret  & 14.653 & 13.758 & 7.787\\
	     &Rewards & 1791.904 & 1796.961 & 1791.061\\
Linear   &Regret & 21.464 & 19.023 & 17.932\\
	    &Rewards & 1541.083 & 1554.948 & 1560.613\\
\bottomrule
\end{tabular}
\vspace{-5pt}
\caption{Empirical regret}\label{Tab:regret}
\vspace{-15pt}
\end{table}

\section{Conclusion}\label{Sec:conclusion}
We propose a mean field model  to study a  large-population bandit game with  a continuous reward. Concretely, we  characterize the existence and uniqueness of MFE by showing the state evolution is  upper semi-continuous, and deriving  contraction mapping conditions based on the stochastic approximation, respectively. Extensive evaluations are performed to validate our  mean field analysis and tight empirical regret.
\section*{Acknowledgments}
This work was supported in part by NSF China under Grant 61902358, Zhejiang Provincial Natural Science Foundation of China under Grant LQ19F020007.

\bibliographystyle{named}
\bibliography{ijcai21}

\newpage

\begin{appendices}
\appendix
\section{Proof of Lemma~\ref{Lem:pseudo}} \label{App:pseudo}
\subsection{Preliminary}
\subsubsection{Asymptotic pseudotrajectory}
We first introduce the concept of the asymptotic pseudotrajectory~\cite{Ref:dynamics1999Benaim}.  Suppose a continuous mapping $\Phi$ on the space $\mathbb{R}^{N \times M}$ is a semiflow:
\begin{displaymath}
\begin{aligned}
&\Phi :  \mathbb{R}_+ \times \mathbb{R}^{N \times M} \rightarrow \mathbb{R}^{N \times M},\\
 &(t,\bm{s}) \rightarrow \Phi(t,\bm{s}) : = \Phi_t(\bm{s}),
\end{aligned}
\end{displaymath}
such that
\begin{displaymath}
\begin{aligned}
\Phi_0 = \mathrm{Identity},~~\Phi_{t+h} = \Phi_t  \circ \Phi_h.
\end{aligned}
\end{displaymath}
In fact, $\Phi_t(\bm{s})$ can be interpreted as  the state evolution at the continuous timescale.  Since $\bm{s}_t=\left[s^1_t, s^2_t,...,s^N_t\right]$ denotes the state profile at time $t$, we can also regard it as a mapping from $t \in \mathbb{R}_+$ to  the set   $\mathbb{R}^{N \times M}$.

\begin{dfn} \label{Dfn:pseudo}
A continuous mapping $\bm{s}: \mathbb{R}_+ \rightarrow \mathbb{R}^{N \times M}$ is an asymptotic pseudotrajectory for $\Phi$ if:
\begin{equation}
\lim_{t \rightarrow \infty}  \sup_{0 \le h \le T} d \left(\bm{s}_{t+h}, \Phi_h(\bm{s}_t)\right)=0, \forall T>0,
\end{equation}
where $d(\cdot)$ is a distance measure.
\end{dfn}
From this definition, if $\bm{s}_{t+h}$ is an  asymptotic pseudotrajectory for $\Phi_h(\bm{s}_t)$, then they  have the \emph{same convergence property}.  To prove  Lemma~\ref{Lem:pseudo}, we will derive a continuous-time ODE of  $\bm{s}_t$, for which the interpolated process of the discrete-time states $\bm{s}_n$ is an asymptotic pseudotrajectory.

\subsubsection{Interpolated process}
 Let $\bm{w}_n = [w_n^1,w_n^2,...,w_n^N]$, and define a filtration $\{\mathcal{F}_n\}_{n \ge 0}$ generated by stochastic processes $\{\bm{s}_n, \bm{w}_n\}_{n \ge 0}$. Obviously, $\mathcal{F}_n$ is a $\sigma$-algebra with $\mathcal{F}_n \subset \mathcal{F}_{n+1}$.  As $u^i_{n}(j) = w^i_n(j) - \mathbb{E}[w^i_n(j)]$, we know that the process $\{u^i_n\}_{n \ge 0}$ is a martingale, and $\mathbb{E}[u^i_{n}(j)| \mathcal{F}_{n}]=0, \forall j \in \mathcal{M}$.  
 
The state profile $\bm{s}_n$ is at the discrete timescale, while the asymptotic pseudotrajectory involves two continuous-time processes. Therefore, we need to introduce a continuous-time  interpolated  process of $\bm{s}_n$ to help link $\bm{s}_n$ to $\bm{s}_t$.  Let $\tau_0 =0$ and $\tau_n = \sum_{k=0}^{n-1} \gamma_k, \forall n \ge 1$. Define the interpolated process of  state profile $\bm{s}_n$ as $\tilde{s}^i_{\tau_n + h} = s^i_n + h \frac{s^i_{n+1} - s^i_n}{\tau_{n+1} -\tau_n}, 0<h<\gamma_{n}, \forall i \in \mathcal{N}$. Intuitively, the state $s_n^i$ and its interpolated process $\tilde{s}^i_t$ have the same trajectories. We can analyze the convergence of $\bm{s}_n$ by characterizing a deterministic process $\bm{s}_t$, for which  the \emph{interpolated process $\tilde{\bm{s}}_t$ is an asymptotic pseudotrajectory}. The essence of Lemma~\ref{Lem:pseudo} is to show that the interpolated process $\tilde{s}^i_t$ is indeed an asymptotic pseudotrajectory for the solution $s^i_t$ to the ODE Eq.~\eqref{Eq:ode}. In the following, we present the detailed proof of Lemma~\ref{Lem:pseudo}.

\subsection{Detailed proof}\label{App:detailed_proof}
We first demonstrate $\lim_{n \rightarrow \infty} \gamma_n =0$. Suppose that $\gamma_n \nrightarrow 0$, and then there exists a  value $\epsilon >0$ such that $\gamma_n \ge \epsilon, \forall n$. Hence, $\sum_{n=0} ^ {K-1} \gamma_n^2 \ge K \epsilon^2 \rightarrow \infty$ as $K \rightarrow \infty$, which is contradictory to Eq.~\eqref{Eq:timestep}. Let $m(t) = \sup \{n \ge 0, t \ge \tau_n\}$.  According to  Robbins-Monro theorem~\cite{Ref:dynamics1999Benaim},  to prove $\tilde{s}^i_t$ is the asymptotic pseudotrajectory for the ODE Eq.~\eqref{Eq:ode}, we need to show that the discrete-time processes $s^i_n, u^i_n$ satisfy the following two conditions.

1) For all $T >0$, $\lim_{n \rightarrow \infty} \sup \{||\sum_{l=n}^{k-1} \gamma_{l}u^i_{l}||_2^2: k= n+1, ..., m(\tau_n +T)\}=0$. From the H\"older's inequality, we have:
\begin{displaymath}
||\sum_{l=n}^{k-1} \gamma_{l}u^i_{l} ||_2^2 \le \sum_{l=n}^{k-1} \gamma_{l}^2||u^i_{l}||_2^2.
\end{displaymath}
Since $\lim_{n \rightarrow \infty} \{\sum_{l=n}^{k-1} \gamma_{l}^2: k= n+1, ..., m(\tau_n +T) \}=0$,  we are left to demonstrate $||u^i_{l}||_2^2$ is  finite. In fact, $u^i_{l} = w^i_l - \mathbb{E}[w^i_l]$, and then $||u^i_{l}||_2^2 = ||w^i_l - \mathbb{E}[w^i_l]||_2^2 \le ||w^i_l||_2^2 + ||\mathbb{E}[w^i_l]||_2^2 \le 2M$. As a result,  we claim condition 1) holds.

2) $\sup_{n} ||s^i_n||_2^2$ is bounded. This is naturally satisfied due to $s^i_n(j) \in [0,1], \forall j \in \mathcal{M}$.

Overall, the state $s_n^i$ will converge to $s_t^i$ specified by Eq.~\eqref{Eq:ode} when $n$ and $t$ go to infinity.

\section{Proof of Theorem~\ref{Thm:cumulative}} \label{App:cumulative}
Let $X^i_n(j) = \mathrm{Exp}(\beta s^i_n(j))$ and $X^i_n = \sum_{j=1}^M X^i_n(j)$. Combining with Eqs.~\eqref{Eq:hedge} and~\eqref{Eq:StateUpdate}, we obtain:
\begin{equation}
\begin{aligned}
\frac{X^i_{n+1}}{X^i_n} &= \sum_{j=1}^{M} \frac{X_{n+1}(j)}{X_n}\\
&= \sum_{j=1}^M \frac{X^i_n(j)}{X^i_n} \mathrm{Exp}[\beta \gamma_{n} (w^i_n(j)-s^i_n(j))]\\
&= \sum_{j=1}^M \frac{\sigma(s^i_n,j)-\frac{\eta}{M}}{1-\eta} \mathrm{Exp}[\beta \gamma_{n} (w^i_n(j)-s^i_n(j))]\\
& \le \sum_{j=1}^M \frac{\sigma(s^i_n,j)-\frac{\eta}{M}}{1-\eta} [1+ \beta \gamma_{n} (w^i_n(j)-s^i_n(j))\\
& + (e-2)\beta^2 \gamma^2_{n} (w^i_n(j)-s^i_n(j))^2]\\
& \le 1 + \sum_{j=1}^M  \frac{\sigma(s^i_n,j)-\frac{\eta}{M}}{1-\eta} \beta \gamma_{n}(w^i_n(j)-s^i_n(j))\\
& + \sum_{j=1}^M  \frac{(e-2)\sigma(s^i_n,j)\beta^2 \gamma^2_{n}}{1-\eta} (w^i_n(j)-s^i_n(j))^2,
\end{aligned}
\end{equation}
where the first inequality is because  $\mathrm{Exp}(y) \le 1+y+(e-2)y^2$ and $e$ is the Euler's number. Using the result $\ln y \le y-1, \forall y>0$ and the fact that $\ln \frac{X^i_{K+1}}{X^i_0} = \sum_{n=0}^{K} \ln \frac{X^i_{n+1}}{X^i_n}$, we have:
\begin{equation} \label{Eq:change1}
\begin{aligned}
&\ln \frac{X^i_{K+1}}{X^i_0}  \le  \sum_{n=0} ^{K} \sum_{j=1}^M  \frac{\sigma(s^i_n,j)-\frac{\eta}{M}}{1-\eta} \beta \gamma_{n}(w^i_n(j)-s^i_n(j))\\
&+ \sum_{n=0} ^{K} \sum_{j=1}^M  \frac{(e-2)\sigma(s^i_n,j)\beta^2 \gamma^2_{n}}{1-\eta} (w^i_n(j)-s^i_n(j))^2.
\end{aligned}
\end{equation}

For any arm $j$, it satisfies:
\begin{equation} \label{Eq:change2}
\ln \frac{X^i_{K+1}}{X^i_0} \ge \ln \frac{X^i_{K+1}(j)}{X^i_0}.
\end{equation} 
Comparing Eqs.~\eqref{Eq:change1} and~\eqref{Eq:change2}, we obtain:  
\begin{equation}
\begin{aligned}
&\ln \frac{X^i_{K+1}(j)}{X^i_0} \le  \sum_{n=0} ^{K} \sum_{j=1}^M  \frac{\sigma(s^i_n,j)-\frac{\eta}{M}}{1-\eta} \beta \gamma_{n}(w^i_n(j)-s^i_n(j))\\
&+ \sum_{n=0} ^{K} \sum_{j=1}^M  \frac{(e-2)\sigma(s^i_n,j)\beta^2 \gamma^2_{n}}{1-\eta} (w^i_n(j)-s^i_n(j))^2.
\end{aligned}
\end{equation}
Moreover, $\ln \frac{X^i_{K+1}(j)}{X^i_0}  = \beta s^i_0(j) + \sum_{n=0}^{K} \beta \Delta s^i_n(j) -\ln (\sum_{j=1}^{M} \mathrm{Exp}(\beta s^i_0(j)))$. Using the expression of the inner product between two vectors, we complete the proof.

\section{Additional Evaluation Results} \label{App:linear}
\subsection{Results for Linear Reward}
\emph{Reward function.} In line with the evaluated resource competition game, we consider the linear reward $r(f_n(j),j)$ has the following expression:
\begin{equation} \label{Eq:linear}
r(f_n(j), j) = 1 - \theta(j) f_n(j),
\end{equation}
where  $\theta(j) \in [0.8\theta,\theta], \forall j \in \mathcal{M}$ with $\theta \in [0,1]$. It can be verified that $r(f_n(j),j)$ falls into the range $[0,1]$ and maintains $\theta$-Lipschitz continuity as well.

\emph{Contraction linear reward}. Similarly, we illustrate the results when $||\cdot||_{\infty}$-contraction mapping stands in the first place. According to Corollary~\ref{Cor:LinearContr}, values of $(\theta,\beta,\eta)$ are assigned to $(1,2,0.2)$, respectively, thereby meeting the contraction condition $\frac{\theta (1-\eta) \beta}{2} < 1$. The arm number $M=4$, and each state is initialized to be a value in $[0,1]$. Still run the bandit game for four times, and the state evolution of a selected arm is depicted in Figure~\ref{Fig:LConState}. We can see that the curves will approach a specific value for different agent number $N$, i.e., unique MFE is derived due to contraction mapping of the reward function. As for the population profile shown in Figure~\ref{Fig:LConPop}, we also obtain that it tends to be more stable around a unique value when $N$ increases, which can also be explained by the Chebyshev's inequality as aforementioned.

\begin{figure}[t]
  \centering
  \subfloat[$N=50$]{\includegraphics[width=0.325\linewidth,height=0.67in]{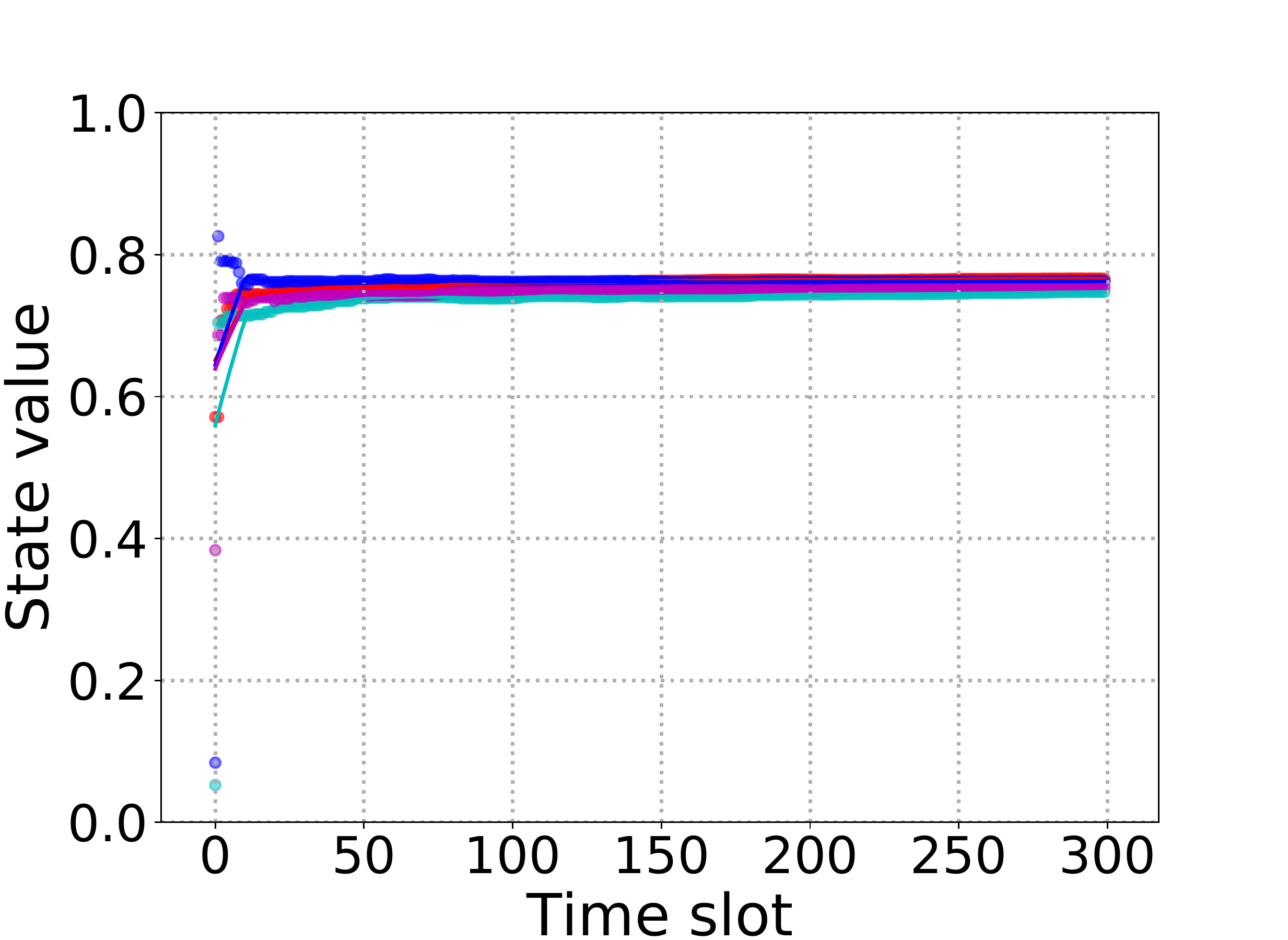}} \ 
  \subfloat[$N=100$]{\includegraphics[width=0.325\linewidth,height=0.67in]{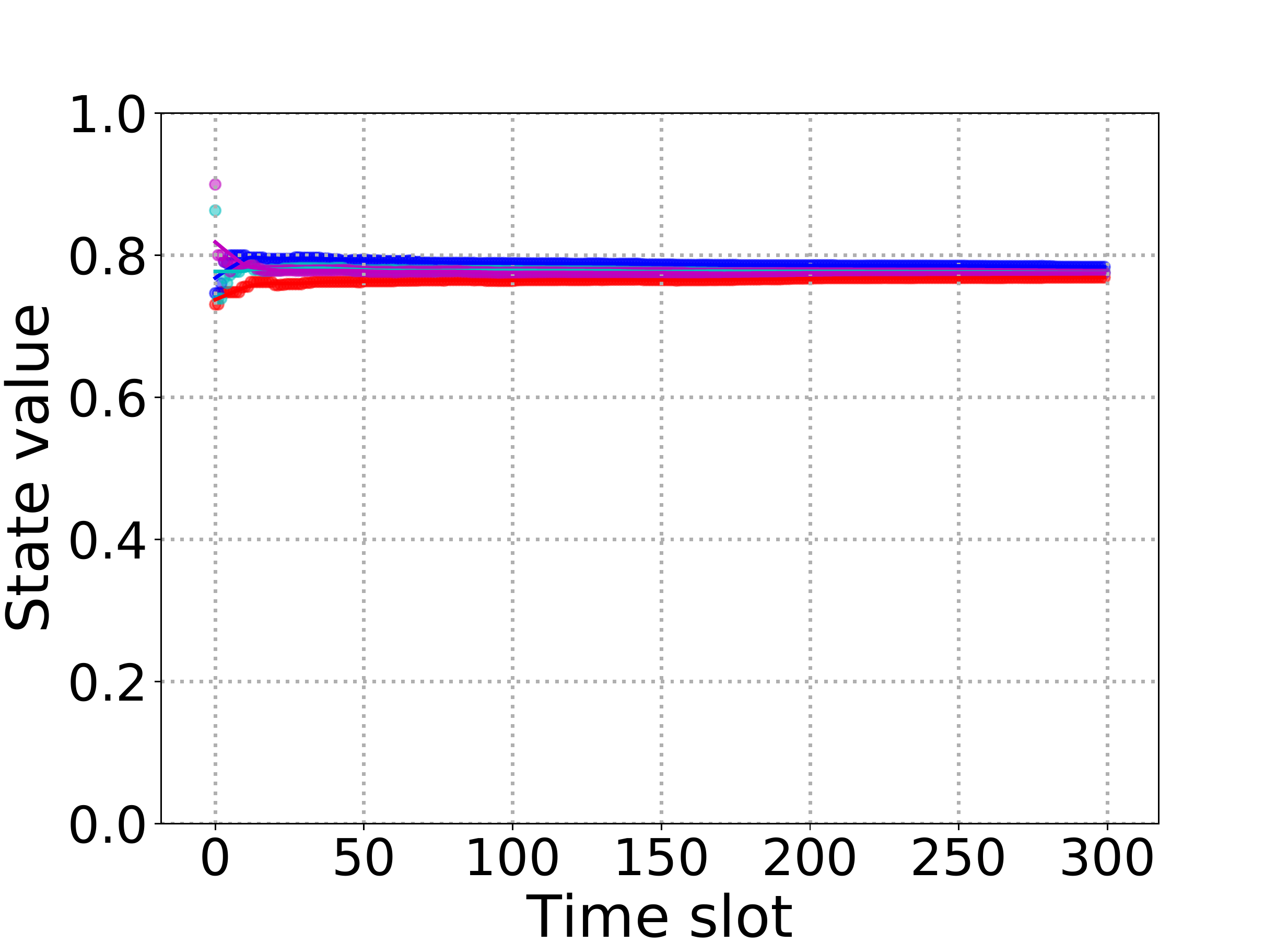}} \
  \subfloat[$N=200$]{\includegraphics[width=0.325\linewidth,height=0.67in]{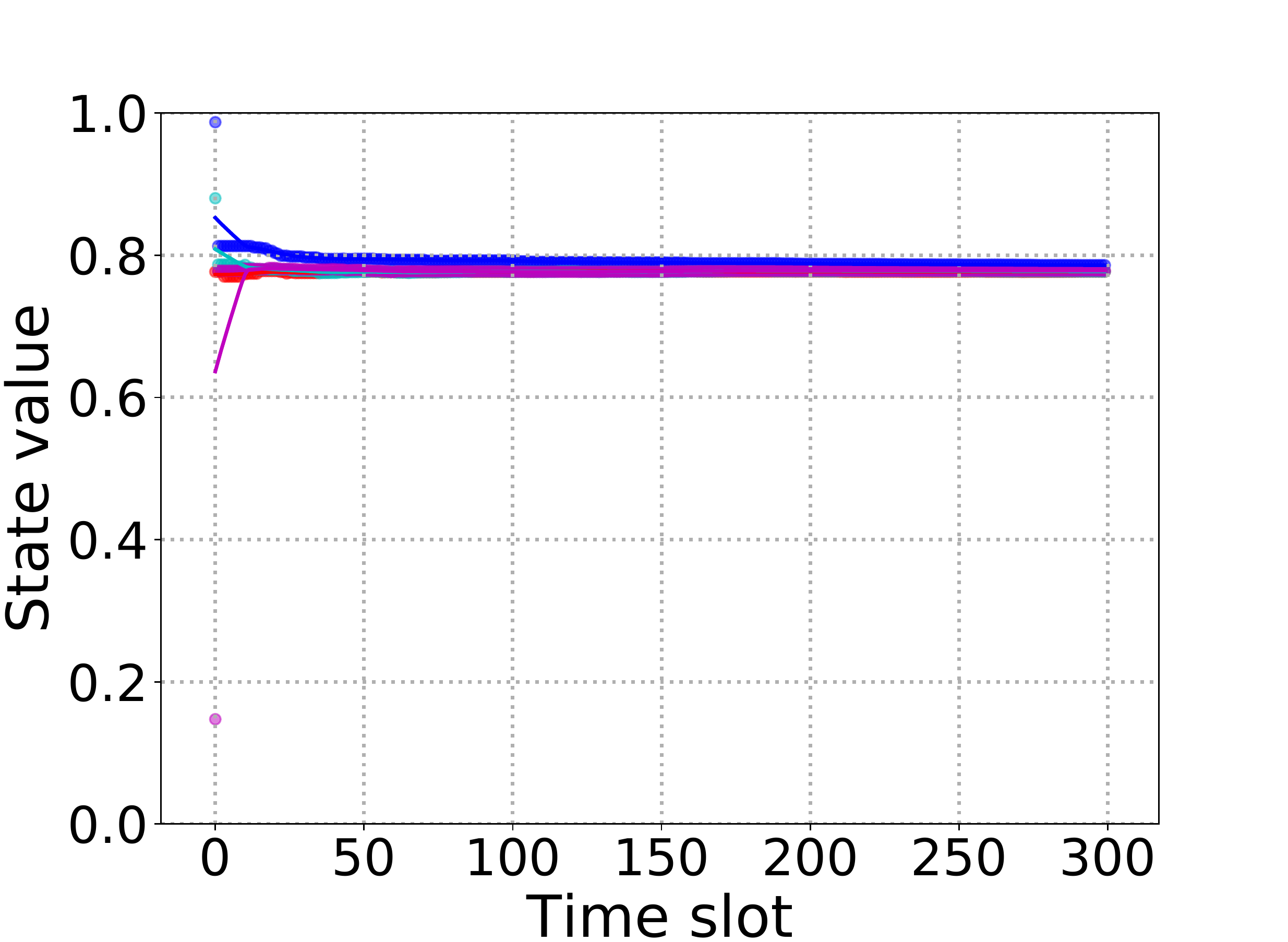}} \\
   \vspace{-6pt}
  \caption{Contraction linear reward: state evolution.}
  \label{Fig:LConState}
  \vspace{-15pt}
\end{figure}

\begin{figure}[t]
  \centering
  \subfloat[$N=50$]{\includegraphics[width=0.325\linewidth,height=0.67in]{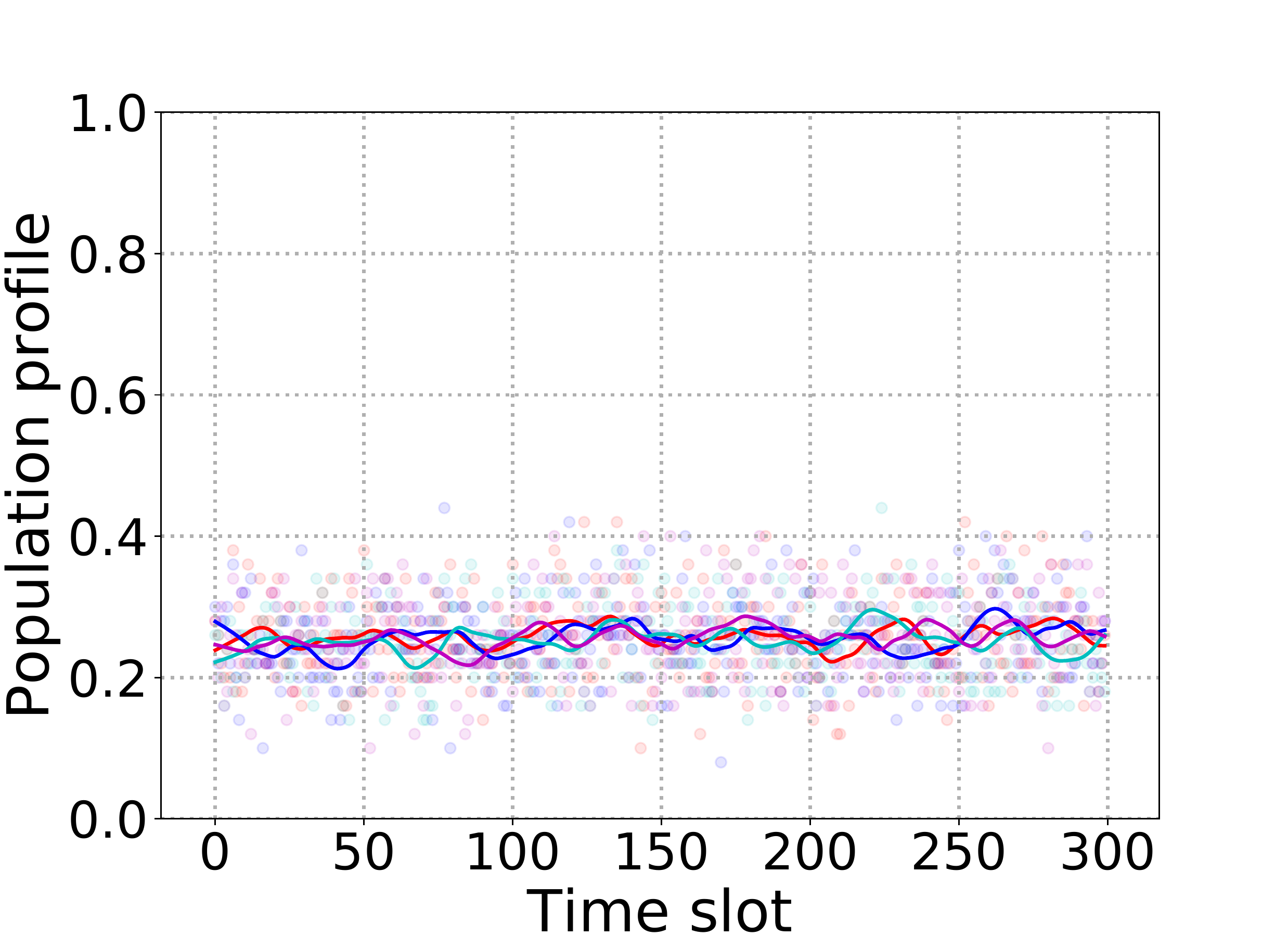}} \ 
  \subfloat[$N=100$]{\includegraphics[width=0.325\linewidth,height=0.67in]{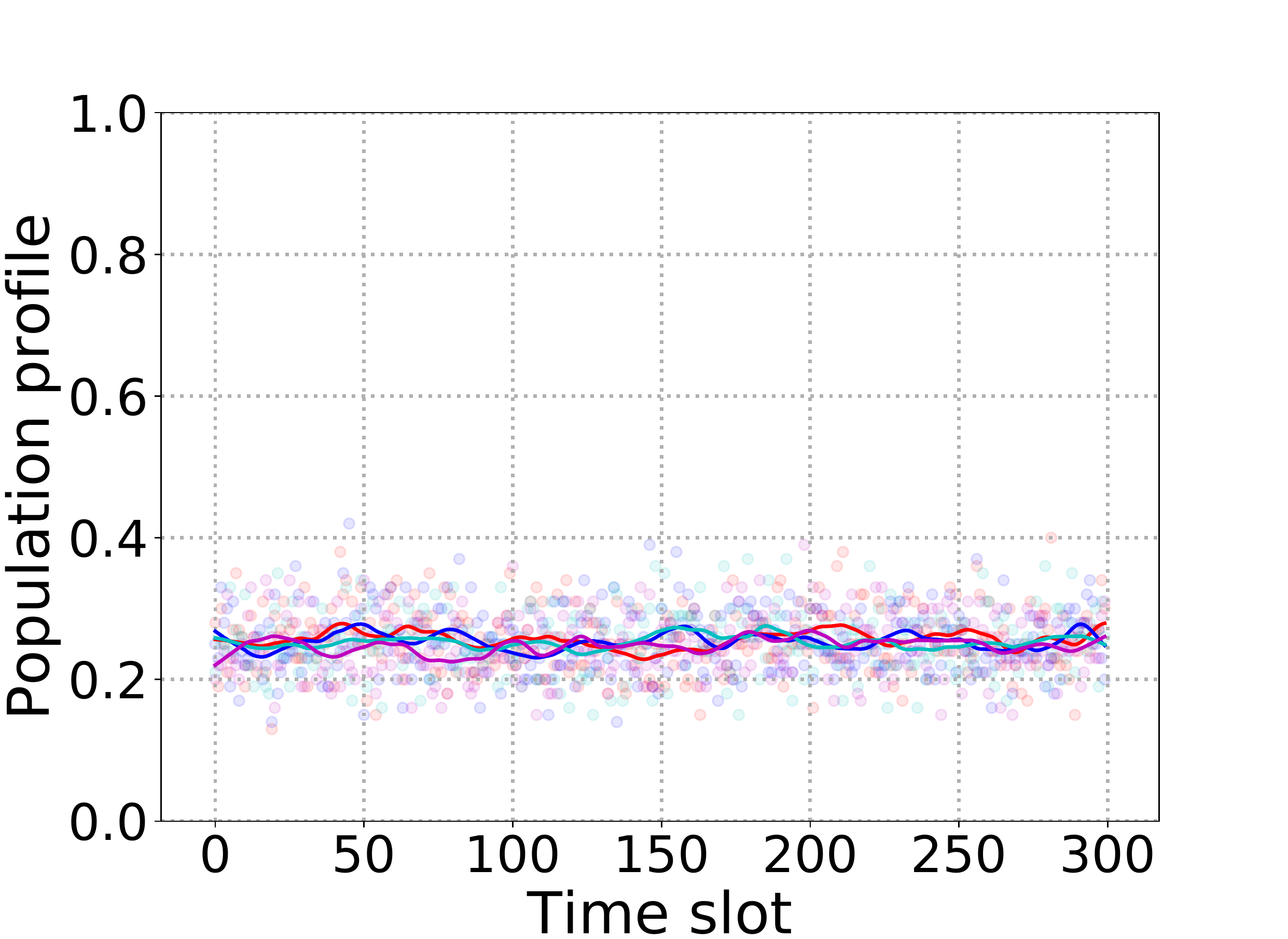}} \
  \subfloat[$N=200$]{\includegraphics[width=0.325\linewidth,height=0.67in]{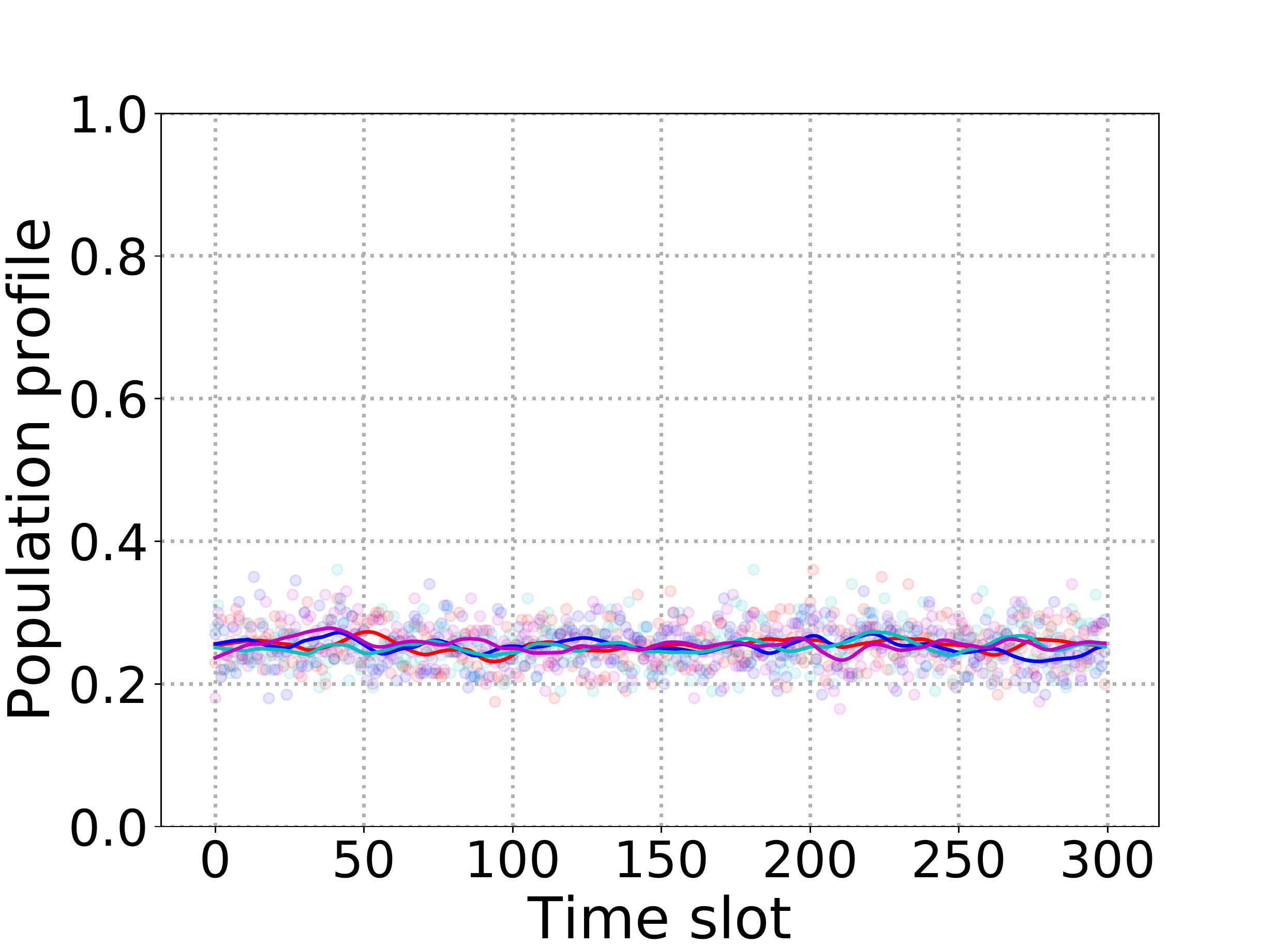}} \\
  \vspace{-6pt}
  \caption{Contraction linear reward: population profile evolution.}
  \label{Fig:LConPop}
   \vspace{-15pt}
\end{figure}

\begin{figure}[t]
  \centering
  \subfloat[$N=50$]{\includegraphics[width=0.325\linewidth,height=0.67in]{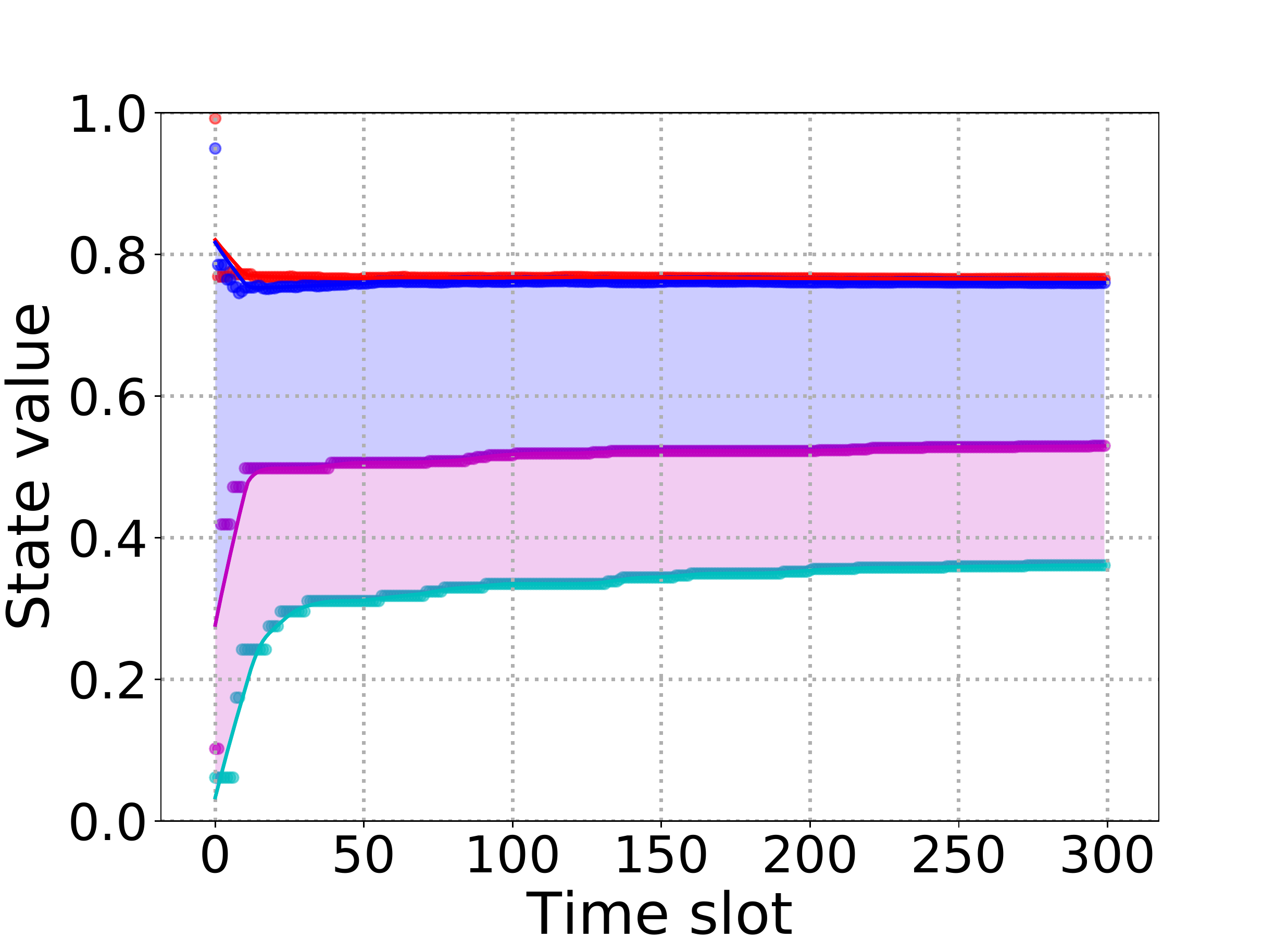}} \ 
  \subfloat[$N=100$]{\includegraphics[width=0.325\linewidth,height=0.67in]{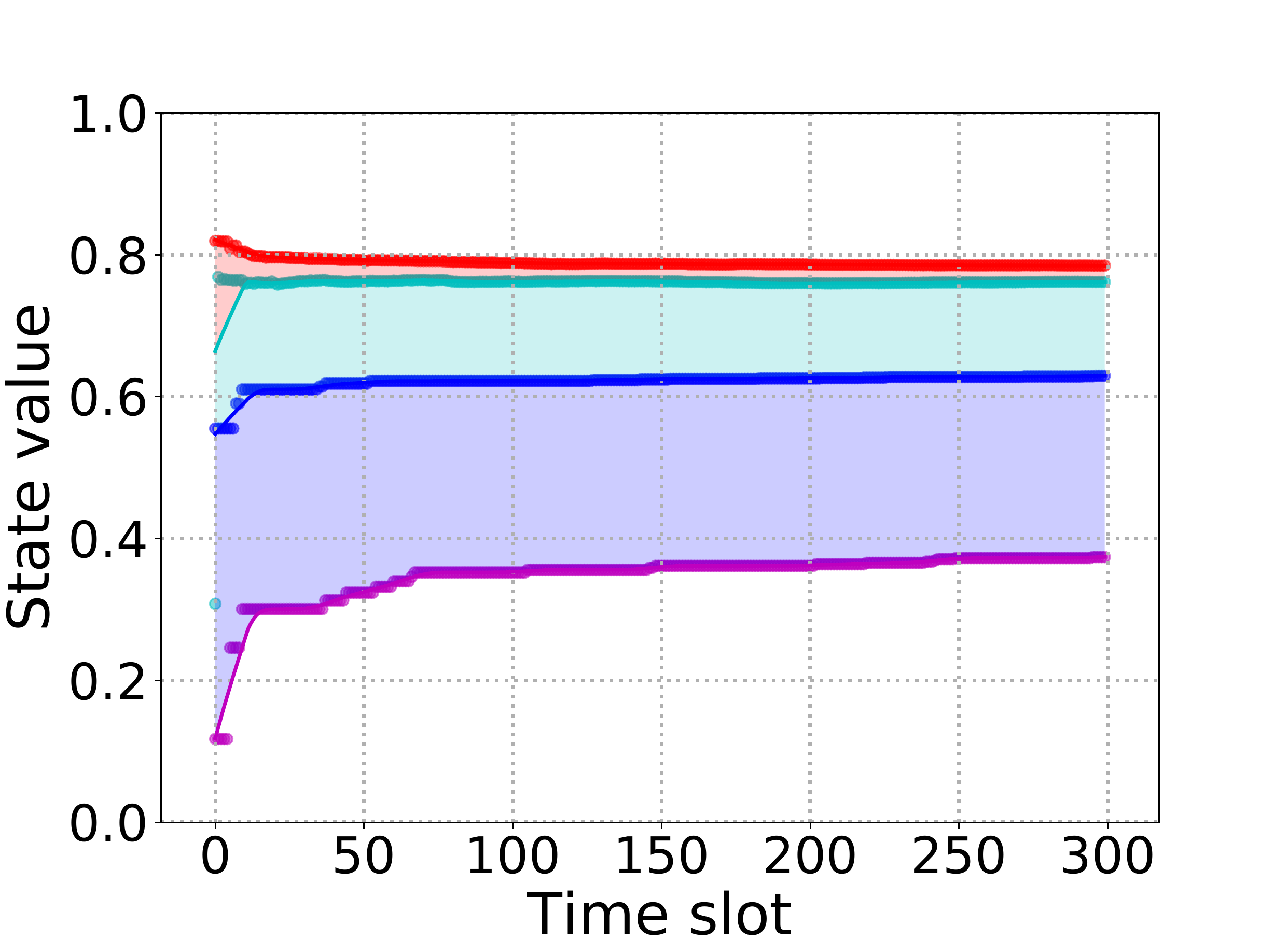}} \
  \subfloat[$N=200$]{\includegraphics[width=0.325\linewidth,height=0.67in]{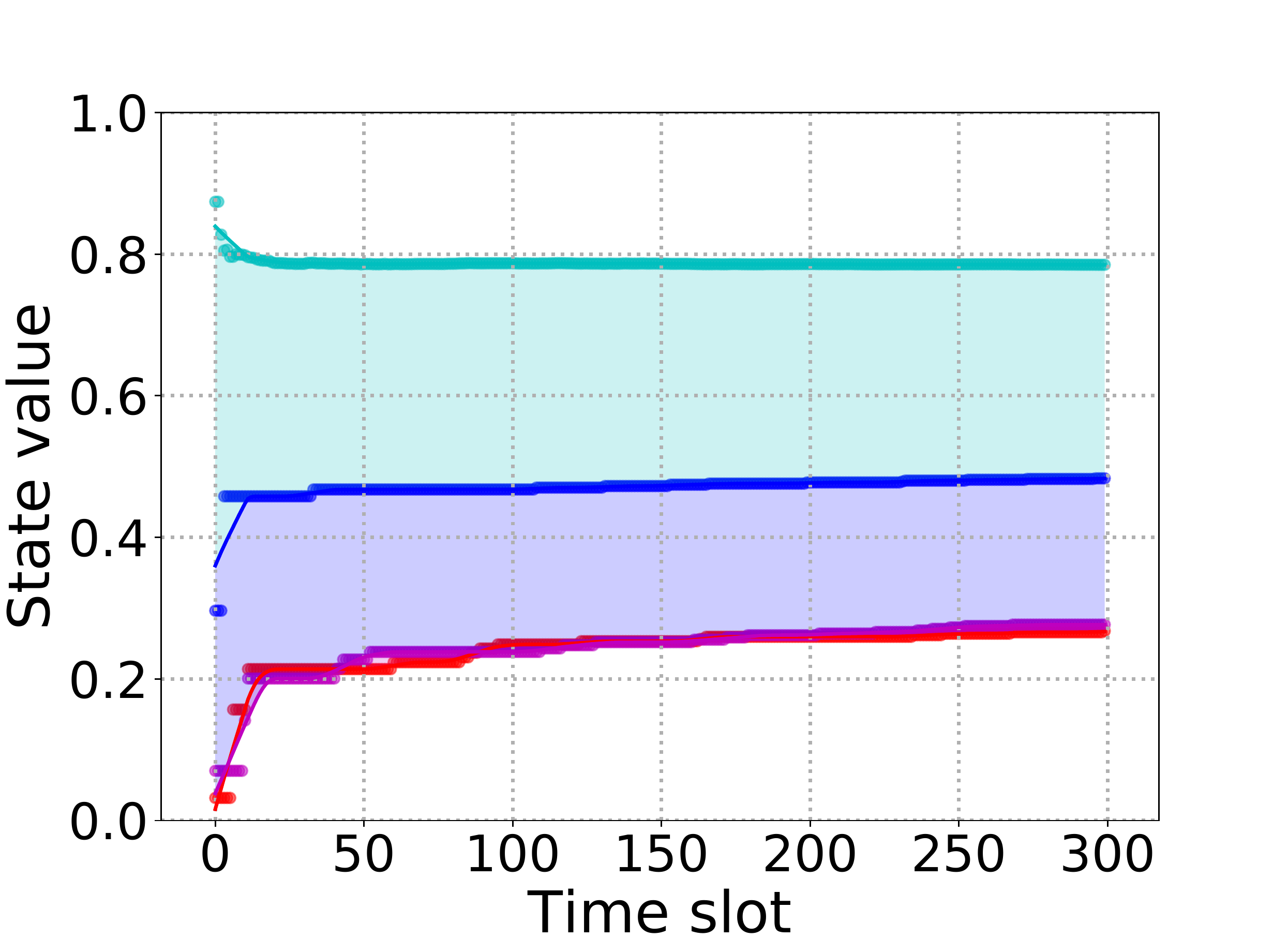}} \\
  \vspace{-6pt}
  \caption{Non-contraction linear reward: state evolution.}
  \label{Fig:LNConState}
   \vspace{-15pt}
\end{figure}

\begin{figure}[t]
  \centering
  \subfloat[$N=50$]{\includegraphics[width=0.325\linewidth,height=0.67in]{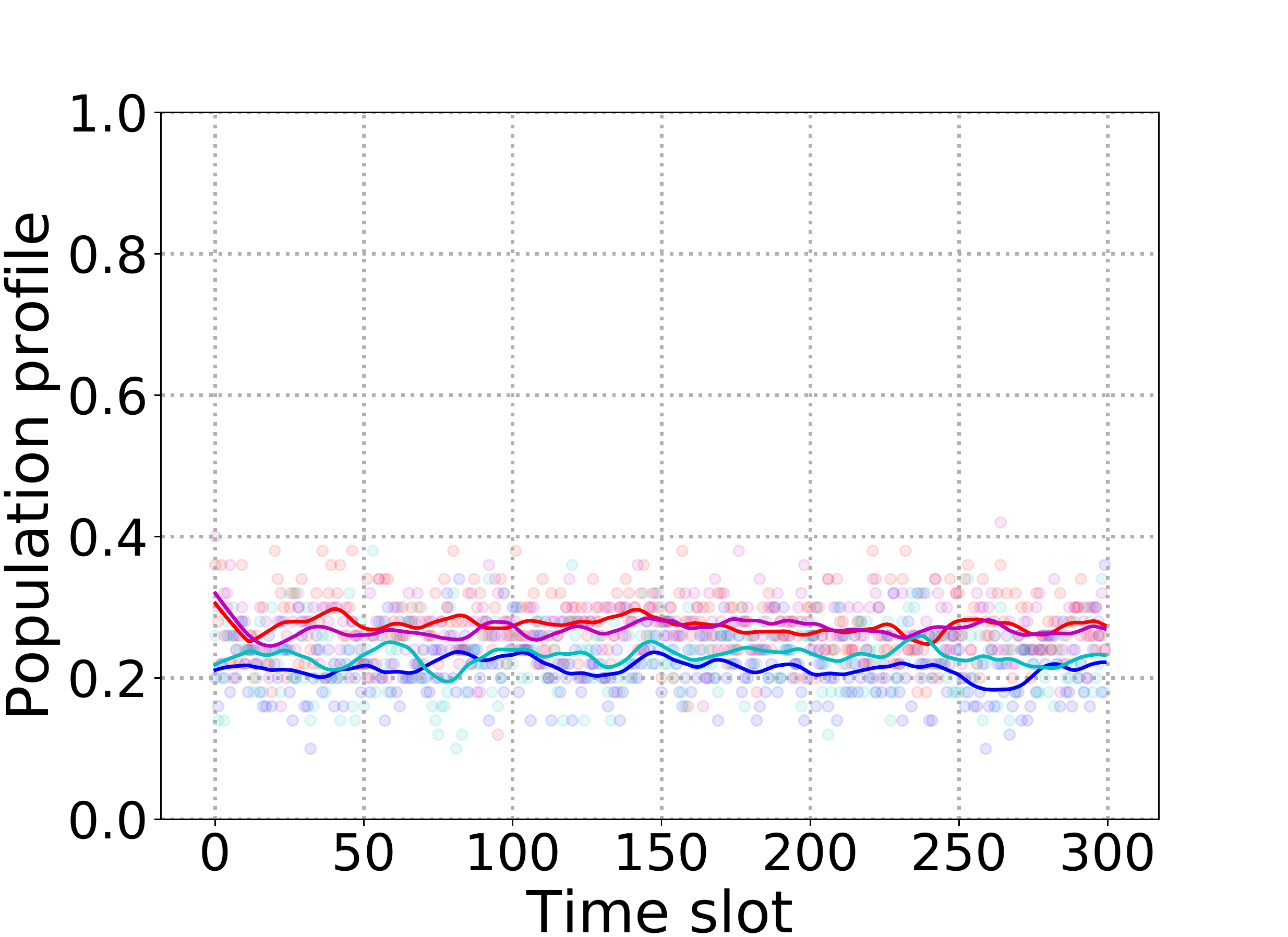}} \ 
  \subfloat[$N=100$]{\includegraphics[width=0.325\linewidth,height=0.67in]{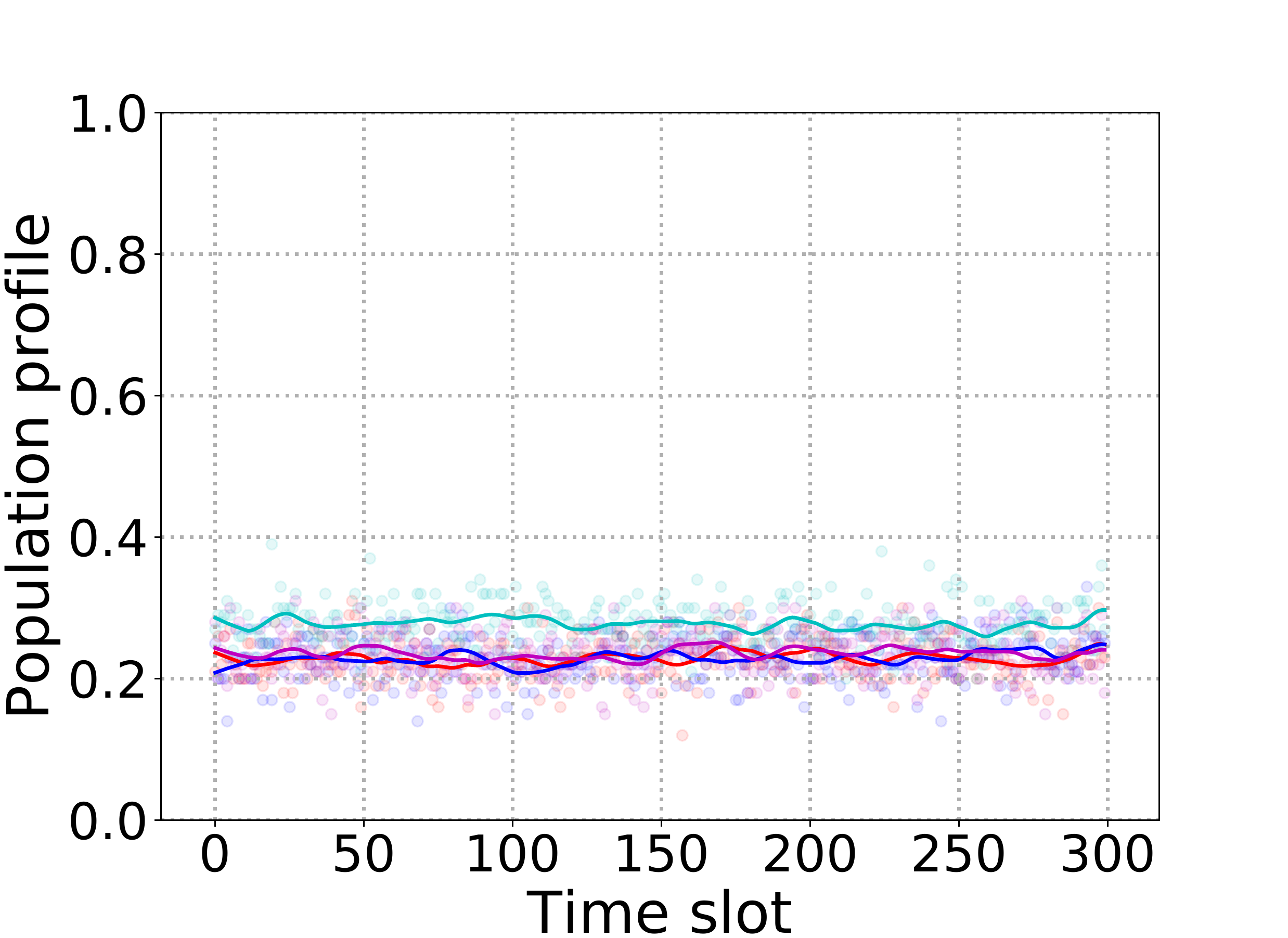}} \
  \subfloat[$N=200$]{\includegraphics[width=0.325\linewidth,height=0.67in]{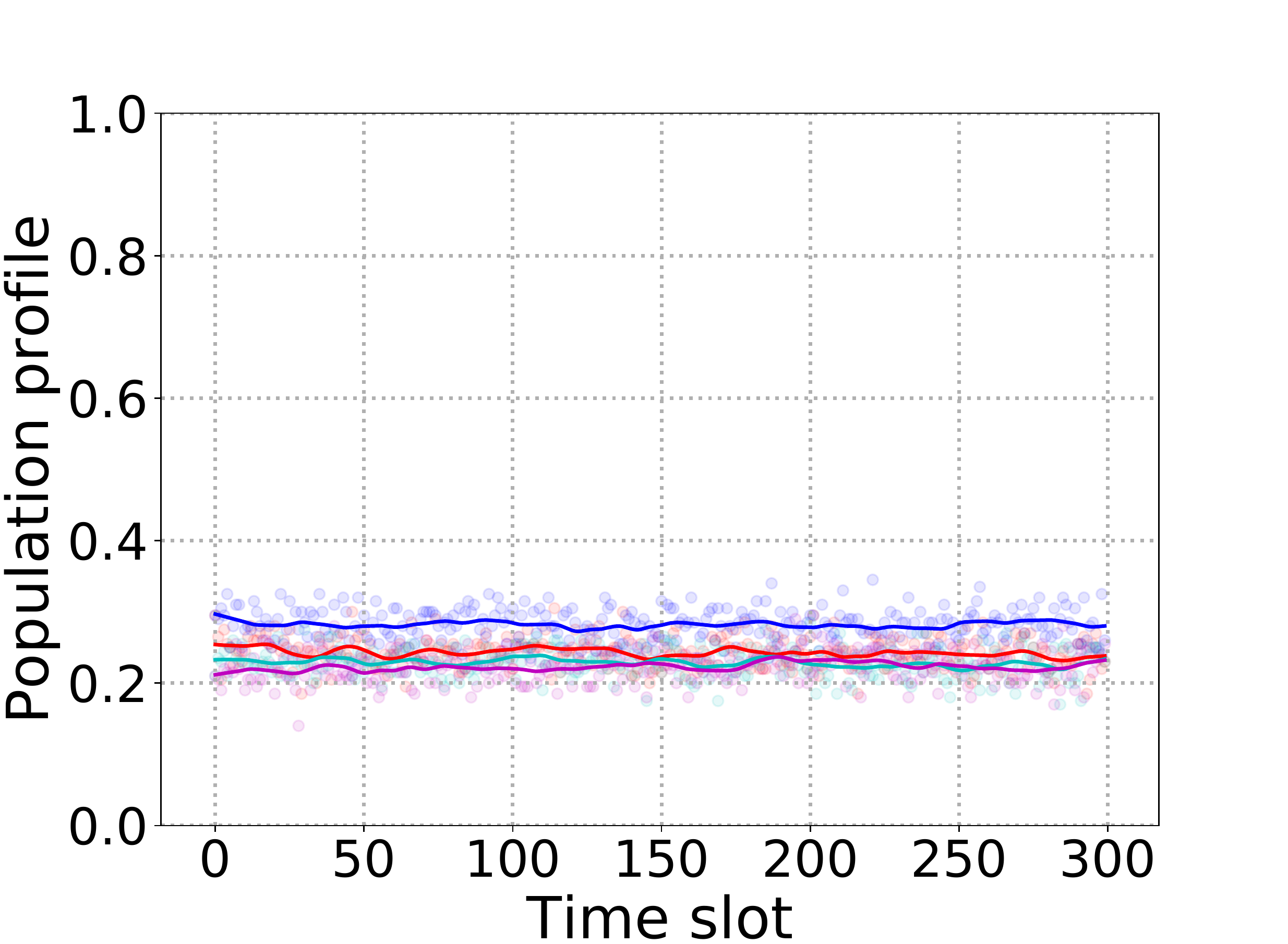}} \\
  \vspace{-6pt}
  \caption{Non-contraction linear reward: population profile evolution.}
  \label{Fig:LNConPop}
   \vspace{-15pt}
\end{figure}

\emph{Non-contraction linear reward}. Now we discuss about the situation where the contraction mapping condition is no longer satisfied. In particular, parameters $\theta$, $\eta$ and $M$ remain unchanged, while $\beta$ is altered to 40. The bandit game runs for four times with state being initialized each time. We show the results for the evolution of  state and population profile in Figures~\ref{Fig:LNConState} and~\ref{Fig:LNConPop}, respectively. Like the general reward, the state will reach different fixed points, which amount to multiple MFEs. Moreover, the population profile also converges to various steady values with the fluctuation being impaired for larger agent number $N$.

\subsection{Empirical Regret under Non-contraction  Mapping}
Aside from the empirical regret demonstrated in Section~\ref{Sec:Eregret},  we further provide the results if the general/linear reward function does not satisfy the contraction mapping condition, i.e.,  $(\theta,\beta,\eta)=(0.5,30,0.2)$ for general reward and  $(\theta,\beta,\eta)=(1,40,0.2)$ for linear reward. Compared with the contraction case, agent states may diverge to various MFEs. Still run the evaluation for six times,  where each run lasts for $T=2000$ time slots in total. Both the cumulative rewards and regrets for the two reward functions are exhibited in Table~\ref{Tab:regret1}. For the cumulative rewards, both general function and linear function yield similar results to those for the contraction case. Nevertheless, the regrets have different behaviors. Pertaining to non-contraction reward functions, the corresponding regrets have large variances, with the average values bigger than those for the contraction case. One of the main reasons behind is that there exist multiple MFEs, which will cause varied and large (on average) regrets. But at the same time, the regrets for general and linear functions are still less than the bound $\sqrt{T} = 44.721$. Besides, regrets tend to decrease as  agent number $N$ increases large, and the general function  has smaller regret than the linear function as well. To sum up,  the stationary policy  still has tight empirical regrets for non-contraction reward functions.
\vspace{-8pt}
\begin{table}
\centering
\begin{tabular}{llccc}
\toprule
Reward & Term & $N=50$ & $N=100$  & $N=200$ \\
\midrule
General & Regret  & 21.187 & 19.014 & 12.108\\
	     & Rewards & 1786.238 & 1784.582 & 1791.829\\
Linear   &Regret & 28.362 & 26.626 & 23.947\\
	    &Rewards & 1552.282 & 1558.675 & 1559.661\\
\bottomrule
\end{tabular}
\caption{Empirical regret under non-contraction reward}\label{Tab:regret1}
\end{table}

\end{appendices}

\end{document}